\newtheorem{theorem}{Theorem}
\newtheorem{lemma}[theorem]{Lemma}
\newtheorem{definition}[theorem]{Definition}
\newtheorem{fact}[theorem]{Fact}
\newtheorem{cor}[theorem]{Corollary}
\newtheorem{task}[theorem]{Task}
\newtheorem{assume}[theorem]{Assumption}
\newcommand {\dmax} [2] {\ensuremath{\mathrm{D}_{\max}\left(#1 \| #2\right)}}
\newcommand {\dmaxeps} [3] {\ensuremath{\mathrm{D}^{#3}_{\max}(#1 \| #2)}}
\newcommand {\F}{\ensuremath{\mathrm{F}}}
\newcommand {\Pur}{\ensuremath{\mathrm{Pur}}}
\newcommand {\eps}{\varepsilon}
\newcommand {\ball} [2] {\ensuremath{\mathcal{B}^{#1}{(#2)}}}
\newcommand {\relent} [2] {\ensuremath{\mathrm{D}{(#1 \| #2)}}}
\newcommand {\varrelent} [2] {\ensuremath{\mathrm{V}{(#1 \| #2)}}}
\newcommand {\mutinf}[2]{\ensuremath{\mathrm{I}(#1 : #2)}}
\newcommand {\relenteps} [3] {\ensuremath{\mathrm{D}^{#3}{(#1 \| #2)}}}
\newcommand {\Tr}{\ensuremath{\mathrm{Tr}}}
\newcommand {\cP}{\mathcal{P}}
\newcommand {\cE}{\mathcal{E}}
\newcommand {\cU}{\mathcal{U}}
\newcommand {\cH}{\mathcal{H}}
\newcommand {\cD}{\mathcal{D}}
\newcommand {\cF}{\mathcal{F}}
\newcommand {\cB}{\mathcal{B}}
\newcommand {\id}{\mathrm{I}}
\newcommand {\bL}{\textbf{L}}
\newcommand {\ellse}{q}
\newcommand {\suppress}[1]{}
\newcommand{\ketbra}[1]{|#1\rangle\langle#1|}
\newcommand{\ket}[1]{|#1 \rangle}
\begin{document}

\title{Quantifying resource in catalytic resource theory}

\author{
Anurag Anshu\footnote{Center for Quantum Technologies, National University of Singapore, Singapore. \texttt{a0109169@u.nus.edu}} \qquad
Min-Hsiu Hsieh\footnote{Centre for Quantum Software and Information, University of Technology Sydney, NSW 2007, Australia. \texttt{Min-Hsiu.Hsieh@uts.edu.au}} \qquad
Rahul Jain\footnote{Center for Quantum Technologies, National University of Singapore and MajuLab, UMI 3654, 
Singapore. \texttt{rahul@comp.nus.edu.sg}} \qquad 
}

\maketitle

\begin{abstract}
We consider a general resource theory that allows the use of free resource as a catalyst. We show that the amount of `resource' contained in a given state, in the asymptotic scenario, is equal to the regularized \emph{relative entropy of resource} of that state, which then yields a straightforward operational meaning to this quantity. Such an answer has been long sought for in any resource theory since the usefulness of a state in information-processing tasks is directly related to the amount of resource the state possesses in the beginning. While we need to place a few assumptions in our resource theoretical framework,  it is still general enough and includes quantum resource theory of entanglement, coherence, asymmetry, non-uniformity, purity, contextuality, stabilizer computation and the classical resource theory of randomness extraction as special cases. Since our resource theoretic framework includes entanglement theory, our result also implies that the amount of noise one has to inject locally in order to erase all entanglement contained in an entangled state is equal to the regularized relative entropy of entanglement, resolving an open question posted in [\emph{Groisman et al., Phys. Rev. A. 72: 032317, 2005}].  On the way to prove the main result, we also quantify the amount of resource contained in a state in the one-shot setting (where one only has a single copy of the state), in terms of the smooth max-relative entropy. Our one-shot result employs a recently developed technique of convex-split lemma.
\end{abstract}

\section{Introduction}

A question that is commonly asked in Physics is how a certain property of a physical system can be used to achieve useful tasks, and how to quantify the amount of such a property in a meaningful way. Various areas in Physics have developed methodology to answer these questions relevant to their own areas; however, many of the approaches are difficult, sometimes even impossible, to employ outside of their respective fields. Quantum resource theory then emerges when  quantum information theory is found to provide a unified platform for characterizing the resource \cite{Devetak2008, Datta2011} because in a nutshell, they can all be viewed as interconversion  of different system states with system-dependent constraints. Since then, individual resource theory has been able to characterize targeted properties in some information-processing tasks using entropic quantities defined on system states.  

The core of a resource theory is built upon two main system-dependent requirements for the resources and allowed operations; namely, (i) the existence of a set of states  that are free and those not in the set are expensive; and (ii) the allowed operations are those that map the set of free states to itself. Various resource theories have been developed in the past decade. The most notable example is the resource theory of entanglement \cite{Horodecki2009}, where the set of free states corresponds to the collection of separable states and the allowed free operations are the local quantum operations and classical communication (LOCC). These two classes of states and operations have attracted stand-alone interests besides resource theory \cite{Chitambar14, Chitambar-2014b}. Under this resource framework, one can then ask the amount of valuable resource (cf.~quantum entanglement) possessed by an entangled state relative to the set of free states (cf.~separable states). This question motivates the scenario of injecting noise locally to the system in order to destroy the quantum entanglement, i.e., the \emph{randomness cost}. A complete characterization of this question has remained open; though, gapped upper and lower bounds have been provided in the asymptotic i.i.d.~setting in Ref.~\cite{Groisman05}. It is worthwhile to mention that investigation of a variant of the above setting, where the local noise is used to destroy the total correlation in an entangled state, relates the minimal randomness cost to the quantum mutual information \cite{Groisman05}. This seminal result gives the first operational meaning to this entropic quantity and advances significantly  our understanding of entanglement theory. Being able to answer the optimal randomness cost to bring entangled states to separable states thus bears equivalent significance, if not more important, since the existence of entanglement is believed to make quantum systems superior to their classical counterparts  and the amount of entanglement is generally linked to its information-processing power \cite{Shor99}. Likewise, this crucial question of the amount of valuable resource possessed by a state relative to its free resource is then adhesive to every resource theory, be it quantum coherence, thermodynamics, etc. 

One can view the above erasing framework as a restricted model of state transformation, where the final state belongs to the free resource \cite{BrandaoGour15, PhysRevA.95.062314}. In the general model of state transformation, it has been demonstrated that not every state transformation is possible, and could become possible if a catalyst is involved \cite{Feng05}. Even if the transformation process is possible in the beginning, adding catalyst would likely make the process much more efficient. Therefore, individual resource theories have included catalyst in their formalism \cite{GOUR20151, Brandao2015}. 

We thus consider a general resource theory framework that allows free resource being used as a catalyst. We demonstrate that this framework includes previous major resource theories of (i) entanglement \cite{Horodecki2009, Groisman05, BrandaoPlenio08, ChitambarHsieh16}, (ii) coherence \cite{BaumgratzCP14, WinterYang16, ChitambarHsieh16, Streltsov16}, (iii) thermodynamics \cite{BrandaoHORS13, BrandaoHNOW15, HorodeckiO15, FaistDJR15, GourMNSH15, VarunG15}, (iv) non-uniformity \cite{GOUR20151}, (v) purity \cite{PhysRevA.67.062104}, (vi) randomness extractors \cite{Trevisan01, TrevisanV00}, (vii) contextuality \cite{HorodeckiGJKL15}, (viii) asymmetry \cite{Wakakuwa17} and (ix) stabilizer computation \cite{VeitchMGE14} as special instances. Treating resource theory in a general framework enjoys the major advantage that the resource conversion tasks done for one particular resource can already be adapted to other resources almost trivially. Hence, it has started to attract more attention \cite{BrandaoGour15, PhysRevA.95.062314}. We establish that the entropic quantity, the \emph{regularized relative entropy of resource} \cite{BrandaoGour15}, characterizes the amount of useful resource in a given state relative to its free resource in the asymptotic i.i.d.~setting. In other words, if one has infinitely many copies of the same state, the randomness cost (per copy of the state) for erasing the resource contained in the given state and bringing it to the closest free state is equal to this entropic quantity. This yields a crucial operational meaning to this entropic quantity. Since the resource theory of quantum entanglement is a special case of our framework, our result directly provides the missing answer\footnote{We remark that an independent work for this answer can be found in Ref.~\cite{Berta2017u}} to the minimal randomness cost for erasing the amount of entanglement in an entangled state to a separable state, a question first posted in Ref.~\cite{Groisman05}. 

We also obtain matching upper and lower bounds on the randomness cost if one only has a single copy of a given state, i.e., the one-shot scenario. Our one-shot bounds are given in terms of smooth max-relative entropy; hence our result also provides a new operational meaning to this quantity in the resource theoretic framework. We emphasize that being able to obtain matching one-shot bounds in such a general resource theory framework is rare \cite{BrandaoGour15}, and is due to the technical tool: the convex split lemma \cite{AnshuDJ14}. This again provides another excellent example that quantum information theory helps in the understanding of quantum physics. 


The paper is organized as follows. Section~\ref{sec_main} contains  assumptions required for our catalytic resource framework. We then define our information-processing task of quantifying the amount of resource and present our main theorem. We also discuss several individual resource theories as special cases of our framework. Section~\ref{sec_Pre} contains definition of entropic quantities and useful lemmas required in the proofs. 
The proofs of our main result are given in Sections~\ref{sec_oneshot}--\ref{sec_asymp}. We conclude in Section~\ref{sec_cld}.

\section{Catalytic Resource Framework and Main Results}\label{sec_main}

The question that we will answer in this general catalytic resource theory framework is how much resource is contained in a given resource state $\rho$  (relative to its free resource). This question is crucial, as this amount directly relates to the power the state $\rho$ possesses in achieving information-processing tasks. 

We first present assumptions that we have to make in our resource theoretic framework. It is apparent that if no limit is set on the allowed operations and free resources, then it is almost impossible to obtain useful characterization, as also noted in Ref.~\cite{BrandaoGour15}. Thus, we need a sufficient (yet small) number of assumptions to keep the theory interesting. 

Our framework for both single-partite and multi-partite settings is discussed below, with the assumptions on free resources and allowed operations. Our assumptions capture the requirement we impose on an experimenter (or experimenters in the multi-partite setting) who wants to quantify a useful property of his/her physical system. For this, we start with the natural experimental framework, where the experimenter has decided upon the global Hilbert space and its tensor decomposition into several Hilbert spaces (for example, as `physically separated' Hilbert spaces). Given this framework, we impose the conditions that are to be satisfied by the set of free resources and the allowed unitary operations. It is important to note that the conditions are only for a given framework; we do not restrict the \textit{choice} of the global Hilbert space and its tensor decomposition by the experimenter. 

For the notations and definitions used in this section, please refer to Section \ref{sec_Pre}.

\subsection{Single-partite case}

As discussed above, an experimenter starts with a collection of quantum registers and performs operations on them. The joint state lives on a global Hilbert space $\cH$. This Hilbert space possesses an ordered decomposition into $r$ systems $\cH = \cH_{1}\otimes \cH_{2}\otimes \ldots \cH_{r}$. Given the Hilbert space and its ordered decomposition (the order of the decomposition may matter in a given physical scenario, hence we take it into account), we define the set of free states and allowed operations. The definition is analogous to the postulates given in Ref.~\cite{BrandaoGour15}.

\begin{definition}[Single party resource theory]
\label{def:generictransform}
Fix an integer $r\geq 1$. Given $\cH= \cH_{1}\otimes \cH_{2}\otimes \ldots \cH_{r}$ and registers $M_1,M_2,\ldots M_r$ where $M_i$ corresponds to $\cH_i$, the set of free resources and free operations are as follows.

\begin{enumerate}
\item For every non-empty ordered set $\bL = \{s_1,s_2, \ldots s_{|\bL|}\}\subseteq \{1,2,\ldots r\}$ (where $\{s_1,s_2,\ldots s_{|L|}\}$ are increasing sequence of integers), let $\cH_{\bL} = \cH_{s_1}\otimes \cH_{s_2} \otimes \ldots \cH_{s_{|L|}}$ and $\cF_\bL \subseteq \cD(\cH_{\bL})$ satisfy the following properties. For brevity, we set $M_{\bL}:= M_{s_1} M_{s_2}\ldots  M_{s_{|L|}}$
\begin{enumerate}
\item $\cF_\bL$ is a convex and closed set. 
\item For any $\bL' \subseteq \bL$, $\cF_{\bL'}\otimes \cF_{\bL\setminus \bL'} \subseteq \cF_\bL$.
\begin{itemize}
\item This assumption states that if two quantum states are free resources, then their tensor product is a free resource as well.
\end{itemize}
\item For any $\sigma_{M_{\bL}} \in \cF_{\bL}$ and $\bL' \subseteq \bL$,  $\Tr_{M_{\bL'}}(\sigma_{M_{\bL}}) \in \cF_{\bL\setminus \bL'}$.
\begin{itemize}
\item This assumption states that if a quantum state on more than one registers is a free resource, then we obtain a free resource by partial trace over a subset of these registers.
\end{itemize}
\item It holds that $\frac{\id_{M_r}}{|M_r|}\in \cF_{\{r\}}$, where $|M_r|$ is the dimension of register $M_r$.
\begin{itemize}
\item Under this assumption, we require the maximally mixed quantum state to be a free resource on the last register. 
\end{itemize}
\end{enumerate}
\item The complete set of free resources is $\cF := \cup_{\bL}\cF_\bL$.

\item For a fixed $\bL$, the free operations $\cU_\bL$ are defined as the set of all unitaries $U: \cH_\bL\rightarrow \cH_\bL$ that satisfy
\begin{enumerate}
\item $\sigma\in \cF_\bL \implies U\sigma U^{\dagger}\in \cF_\bL$. 
\item $U\in \cU_\bL $ if and only if $U^{\dagger}\in \cU_\bL $.
\end{enumerate}
\item The set of all free operations are $\cU:= \cup_\bL \cU_\bL$.

\end{enumerate}

\end{definition}

\noindent We remark that these assumptions we required are natural and mild \cite{BrandaoGour15}.  Above, we also observe that the set $\cU_\bL$ (for every $\bL$) forms a group.

Now we are in a position to formally define our task, that we call an $(\eps, \log|J|)$-transformation of $\rho_{M}$ to $\cF$.

\begin{task}
\label{catalytictransform}
Let $\eps >0$, $r\geq 1$ be an integer and fix a Hilbert space $\cH$ with decomposition $\cH_{1}\otimes \cH_{2}\otimes \ldots \otimes\cH_{r}$, which is chosen by the experimenter. Let the register $M$ correspond to $\cH_M =\cH_{1}$, register $J$ correspond to $\cH_J = \cH_{r}$ and register $E$ correspond to $\cH_E=\cH_{2}\otimes \ldots \cH_{r-1}$. An experimenter holds a quantum state $\rho_{M}$. Using a state $\mu_{EJ} \in \cF$, she applies a unitary $U \in \cU$  to obtain a joint quantum state $\Theta_{MEJ}$:
\[
\Theta_{MEJ} = U (\rho_M\otimes\mu_{EJ}) U^\dagger.
\] 
It is required that there exists a $\sigma_{ME}\in \cF$ such that $\Pur(\Theta_{ME}, \sigma_{ME})\leq \eps$, where the chosen distance measure is defined in Equation (\ref{eq_dist}). The number of discarded qubits is $\log |J|$.  
\end{task}

In above task, no restriction is made on the transformation of the free resource state $\mu_{EJ}$. But in many cases, it is desirable that this quantum state be returned close to its original form and hence act as a catalyst. For example, the framework of catalytic decoupling \cite{MajenzBDRC17} studies such a scenario. Our achievability result shall belong to such class of transformations. Hence we have the following definition.

\begin{definition}
\label{actualcatalytic}
Task \ref{catalytictransform} is said to be a $(\eps, \log|J|)$-\textit{catalytic transformation} of $\rho_M$ if $\mu_{EJ} = \mu_E\otimes \mu_J$ and $\sigma_{ME} = \sigma_M\otimes \mu_E$ for some $\sigma_M\in \cF$. 
\end{definition}

Task \ref{catalytictransform} is motivated by the work \cite{Groisman05}, which considered the problem of transforming a bipartite quantum state $\rho_{AB}^{\otimes n}$ into a product state $\rho_{A^n}\otimes \rho_{B^n}$ with the aid of shared randomness and local unitaries. It was shown that the number of bits of randomness required (in other words, the randomness cost) is $\approx n\cdot \mutinf{A}{B}_{\rho}$. Task \ref{catalytictransform} is along the lines of the framework considered in above work, but with an additional freedom of allowing the use of additional free resources that can aid in the transformation of the desired quantum state. Moreover, when the register $J$ is classical (in a more precise sense to be discussed below), we can interpret $\log|J|$ to be the randomness cost of the protocol. In Definition \ref{actualcatalytic}, we have provided further restriction that the free resource (expect for the randomness used) be returned with small error.

We provide a near optimal characterization of the randomness cost of Task \ref{catalytictransform} in Theorem \ref{thm_oneRER} below. Our achievability protocol requires a further assumption related to the last register (and its connection to rest of the registers), as this register serves as the source of `classical randomness'. This assumption is as follows. 

\begin{assume}
\label{assumeperm}
Invoke the notation in Definition \ref{def:generictransform}. Then there exists a canonical basis $\cB = \{\ket{1}, \ket{2}, \ldots \ket{\ell_r}\}$ on $\cH_{r}$ (where $\ell_r$ is the dimension of $\cH_r$) such that 
$$\cF = \text{conv}\left\{\frac{1}{\ell_r}\sum_j U_j\sigma U^{\dagger}_j\otimes \ketbra{j}_{M_r}: \sigma\in \cF_{\{1,2,\ldots r-1\}}, U_j \in \cU_{\{1,2,\ldots r-1\}}\right\},$$
and the experimeter only performs an operation from the set
$$\left\{\sum_{j=1}^{\ell_r} U_j \otimes \ketbra{j}_{M_r}: U_j \in \cU_{\{1,2,\ldots r-1\}}\right\}.$$

Moreover, suppose it holds that $\cF_{\{j\}} = \cF_{\{1\}}$ for all $j<r$. Then $\sigma_{M_1M_2\ldots M_r}\in \cF$  implies $\sigma_{M_jM_1\ldots M_{j-1}M_{j+1}\ldots M_r}\in \cF$ for all $j<r$.
\end{assume}

Under this assumption, all the states $\mu_{EJ}\in \cF$ in Task \ref{catalytictransform} are classical-quantum states, with register $J$ being classical and $\mu_J$ being diagonal in the canonical basis. Thus, the assumption allows us to interpret $\log|J|$ as the randomness cost of the protocol. We have the following theorem.

\begin{theorem} \label{thm_oneRER}
Fix $\epsilon,\delta>0$, and a quantum state $\rho_{M}$. 
\begin{itemize}
\item Achievability: Suppose Assumption \ref{assumeperm} holds. Then there exists an $(\eps+\delta, k + 2\log\frac{1}{\delta})$-catalytic transformation of $\rho_{M}$ to $\cF$ , where $k := \min_{\sigma_{M}\in \cF}\dmaxeps{\rho_{M}}{\sigma_{M}}{\eps}$.
\item Converse: For every $(\eps, \log|J|)$- transformation of $\rho_M$ to $\cF$, it holds that 
$$\log |J| \geq \min_{\sigma_M\in \cF}\dmaxeps{\rho_M}{\sigma_M}{\eps},$$ if Assumption \ref{assumeperm} is true. Otherwise, 
$$\log |J| \geq \frac{1}{2}\min_{\sigma_M\in \cF}\dmaxeps{\rho_M}{\sigma_M}{\eps}.$$ 
\end{itemize}

\end{theorem}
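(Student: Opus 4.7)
The plan is to handle the two bounds separately, using the convex split lemma~\cite{AnshuDJ14} for achievability and a direct operator-inequality argument for the converse. For \emph{achievability}, I would let $\sigma_M^*\in\cF$ attain $k := \min_{\sigma_M\in\cF}\dmaxeps{\rho_M}{\sigma_M}{\eps}$, which by definition of smooth max-relative entropy produces some $\rho'_M$ with $\Pur(\rho'_M,\rho_M)\leq\eps$ and $\rho'_M \leq 2^k \sigma_M^*$. Setting $n := |J| = 2^{k+2\log(1/\delta)}$, the protocol prepares the catalyst $\mu_{EJ} := (\sigma_M^*)^{\otimes(n-1)}\otimes\frac{\id_J}{n}$, with the $n-1$ copies placed on $E=\cH_2\otimes\cdots\otimes\cH_{r-1}$; this lies in $\cF$ by the tensor closure~1(b), the uniform-state clause~1(d), and the symmetric convex structure granted by Assumption~\ref{assumeperm}. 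The free unitary is the controlled swap $U = \sum_{j=1}^n\mathrm{SWAP}_{M,M_j}\otimes\ketbra{j}_J$, which conforms to the form $\sum_j U_j\otimes\ketbra{j}_{M_r}$ permitted by Assumption~\ref{assumeperm} and permutes registers whose free sets coincide. The resulting $ME$-marginal is the convex-split state $\frac{1}{n}\sum_j (\sigma_M^*)^{\otimes(j-1)}\otimes\rho_{M_j}\otimes(\sigma_M^*)^{\otimes(n-j)}$; substituting $\rho'_M$ for $\rho_M$ and invoking the convex split lemma with the chosen $n$ shows closeness to $(\sigma_M^*)^{\otimes n} = \sigma_M^*\otimes\mu_E$ within purified distance $\delta$. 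Data processing and the triangle inequality then yield the stated $\eps+\delta$ error, and the output has the catalytic form since $\mu_J$ appears as a separate tensor factor in $\mu_{EJ}$.

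For the \emph{converse} under Assumption~\ref{assumeperm}, I would exploit the forced classical--quantum structure. Writing $\mu_{EJ} = \frac{1}{|J|}\sum_j \tilde\mu_E^j\otimes\ketbra{j}_J$ with each $\tilde\mu_E^j$ free and $U = \sum_j U_j\otimes\ketbra{j}_J$, direct computation gives $\Theta_{ME} = \frac{1}{|J|}\sum_j U_j(\rho_M\otimes\tilde\mu_E^j)U_j^\dagger$, so positivity of each summand yields the pointwise inequality $|J|\,\Theta_{ME} \geq U_j(\rho_M\otimes\tilde\mu_E^j)U_j^\dagger$. Conjugating by $U_j^\dagger\in\cU$ (property~3(b)) and tracing out $E$ produces the exact bound $\rho_M \leq |J|\,\Tr_E[U_j^\dagger\Theta_{ME}U_j]$. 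Replacing $\Theta_{ME}$ with the $\eps$-close free state $\sigma_{ME}$ via a smoothing step then yields the free $\sigma_M^* := \Tr_E[U_j^\dagger\sigma_{ME}U_j]\in\cF$ and a witness $\rho'_M$ in the $\eps$-ball around $\rho_M$ with $\rho'_M \leq |J|\sigma_M^*$, i.e.\ $\dmaxeps{\rho_M}{\sigma_M^*}{\eps}\leq\log|J|$. Without Assumption~\ref{assumeperm}, this pointwise inequality is lost; one instead begins from the unitary-invariance identity $\dmax{\rho_M\otimes\mu_{EJ}}{\sigma_M\otimes\mu_{EJ}}=\dmax{\rho_M}{\sigma_M}$, combines it with $\dmax{\mu_{EJ}}{\mu_E\otimes\id_J/|J|}\leq\log|J|$ available from clause~1(d), and propagates through $\Tr_J$ and the $\eps$-closeness of $\Theta_{ME}$ to $\sigma_{ME}$; the factor of two emerges from the Fuchs--van de Graaf style conversion between operator-distance and purified-distance smoothings at the joint-system level.

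The main obstacle is the \emph{smoothing step in the converse}: the exact inequality $\rho_M \leq |J|\,\Tr_E[U_j^\dagger\Theta_{ME}U_j]$ must be upgraded to a bound against a reference state that is genuinely free, using only $\Pur(\Theta_{ME},\sigma_{ME})\leq\eps$. This calls for a continuity-type lemma---if $\rho\leq 2^k\tau$ and $\Pur(\tau,\sigma)\leq\eps$ then there exists $\rho'$ with $\Pur(\rho',\rho)\leq\eps$ and $\rho'\leq 2^k\sigma$---which should follow from an Uhlmann / gentle-operator argument presumably recorded in Section~\ref{sec_Pre}, but requires care so that the perturbation propagates unaltered through the isometry $U_j^\dagger$ and the partial trace over $E$, uniformly in the index $j$. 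The same technical issue, now at the joint $MEJ$ level, is what forces the factor of two in the Assumption-free case.
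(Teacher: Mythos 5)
Your achievability argument is essentially the paper's proof of Theorem \ref{theo:catalyst}: the same catalyst $(\sigma_M^*)^{\otimes (n-1)}\otimes\frac{\id_J}{n}$ with $n=2^k/\delta^2$, the same controlled swap licensed by Assumption \ref{assumeperm}, and the convex-split lemma plus a triangle inequality for the smoothing; the minor bookkeeping difference (running the convex split over $M$ together with $n-1$ catalyst registers rather than swapping $\rho$ into one of $n$ catalyst copies) is immaterial.

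The converse, however, has a genuine gap, located exactly at the step you yourself flag as the main obstacle. The continuity lemma you rely on --- if $\rho\le 2^k\tau$ and $\Pur(\tau,\sigma)\le\eps$ then some $\rho'$ with $\Pur(\rho',\rho)\le\eps$ satisfies $\rho'\le 2^k\sigma$ --- is false with that parameter, and no such lemma appears in Section \ref{sec_Pre}. Take $\rho=\ketbra{0}$, $\tau=\id/2$, $k=1$, and $\sigma=(\tfrac12-q)\ketbra{0}+(\tfrac12+q)\ketbra{1}$: then $\Pur(\tau,\sigma)=q+O(q^3)$, yet any normalized $\rho'\le 2\sigma$ has $\bra{0}\rho'\ket{0}\le 1-2q$ and hence $\Pur(\rho',\rho)\ge\sqrt{2q}=\Omega(\sqrt{\eps})$. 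So smoothing the \emph{second} argument of your exact operator inequality unavoidably degrades $\eps$ to $O(\sqrt{\eps})$, and your route delivers only $\log|J|\ge\min_{\sigma_M\in\cF}\dmaxeps{\rho_M}{\sigma_M}{O(\sqrt{\eps})}$, which is weaker than the theorem. The paper sidesteps this entirely: via Uhlmann (Fact \ref{closeextension}) it builds a classical--quantum extension $\Theta'_{MEJ}$ whose $ME$-marginal is \emph{exactly} the free state $\sigma_{ME}$ and which satisfies $\Pur(\Theta'_{MEJ},\Theta_{MEJ})\le\eps$; the c-q structure then gives the exact bound $\Theta'_{MEJ}\preceq |J|\,\sigma_{ME}\otimes\frac{\id_J}{|J|}$, and pulling back through $U^{\dagger}$ and $\Tr_{EJ}$ puts the entire perturbation on the \emph{first} argument of $\dmax{\cdot}{\cdot}$ (where it only enlarges the smoothing ball to $\eps$), while the reference $\Tr_{EJ}(U^{\dagger}(\sigma_{ME}\otimes\frac{\id_J}{|J|})U)$ is exactly free. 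Relatedly, your account of the factor $\tfrac12$ in the assumption-free case is not where it actually comes from: there is no Fuchs--van de Graaf conversion; rather, for a general (non-c-q) extension one only has $\Theta'_{MEJ}\preceq |J|\,\Theta'_{ME}\otimes\Pi_J=|J|^2\,\sigma_{ME}\otimes\frac{\id_J}{|J|}$ (Fact \ref{fact:bipartitedmaxquantum}), i.e.\ $2\log|J|$ in place of $\log|J|$ in the very same argument.
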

\noindent The proof of this theorem is given in Section \ref{sec_oneshot}. Now we consider the asymptotic i.i.d. properties of our results.

\begin{definition}
\label{def:asymptotic}
Suppose Assumption \ref{assumeperm} holds. We say that the asymptotic randomness \textit{rate} of catalytic transformation of $\rho_M$ is $R$, if for every $\eps>0$, there exists an $n_0(\eps)$ such that for all $n\geq n_0(\eps)$, there exists a $(\eps, nR)$-catalytic transformation of $\rho^{\otimes n}_M$ to $\cF$.
Define the following relative entropy of resource:
\begin{equation*}
E(\rho) = \inf_{\sigma_M\in \cF} D(\rho_M\|\sigma_M)
\end{equation*}
and the regularized relative entropy of resource:
\[
E^\infty (\rho) = \lim_{n\to \infty} \frac{1}{n} E(\rho_M^{\otimes n}). 
\]
Let $\cF_n$ denote the set of free resources for the register $M^n:= M\times M\times\ldots M$. Define the constant 
$$C(\cF):= \lim_{n\rightarrow \infty}\frac{1}{n}\inf_{\tau\in \cF_n}\|\log \tau\|_\infty.$$
\end{definition}
Using Theorem~\ref{thm_oneRER}, we obtain the following.

\begin{theorem}\label{coro_RER}
Suppose the resource theory $\cF$ is such that $C(\cF) < \infty$. Then for a quantum state $\rho_M$, the asymptotic randomness rate of catalytic transformation of $\rho$ is given by $E^\infty (\rho)$.
\end{theorem}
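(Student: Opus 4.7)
The plan is to invoke Theorem~\ref{thm_oneRER} on the product state $\rho^{\otimes n}_{M^n}$ and then analyse the resulting one-shot quantity $k_n(\eps):=\min_{\sigma_n\in\cF_n}\dmaxeps{\rho^{\otimes n}}{\sigma_n}{\eps}$ asymptotically. Since the additive $2\log(1/\delta)$ in the achievability half of Theorem~\ref{thm_oneRER} is $o(n)$, the rate question reduces to establishing
\[
\lim_{\eps\to 0}\lim_{n\to\infty}\tfrac1n k_n(\eps)=E^\infty(\rho),
\]
after which matching achievability and converse bounds on the asymptotic rate follow directly from the two halves of Theorem~\ref{thm_oneRER} and Definition~\ref{def:asymptotic}.

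For the upper bound on $\tfrac1n k_n(\eps)$ (hence achievability of any rate above $E^\infty(\rho)$), I would exploit the fact that $E^\infty(\rho)$ is defined as a regularised infimum: for any $\eta>0$ pick a block length $m$ and $\sigma_m\in\cF_m$ with $\tfrac{1}{m}\relent{\rho^{\otimes m}}{\sigma_m}\le E^\infty(\rho)+\eta$. Property~1(b) of Definition~\ref{def:generictransform} gives $\sigma_m^{\otimes k}\in\cF_{mk}$, and the Datta--Renner asymptotic equipartition property
\[
\lim_{k\to\infty}\tfrac{1}{k}\dmaxeps{(\rho^{\otimes m})^{\otimes k}}{\sigma_m^{\otimes k}}{\eps}=\relent{\rho^{\otimes m}}{\sigma_m},
\]
valid for every fixed $\eps\in(0,1)$, then bounds $\limsup_n \tfrac1n k_n(\eps)\le \tfrac{1}{m}\relent{\rho^{\otimes m}}{\sigma_m}\le E^\infty(\rho)+\eta$.

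For the converse I would use the hypothesis $C(\cF)<\infty$ to convert $D_{\max}^\eps$ into $D$ at controlled cost. Pick $\tau_n\in\cF_n$ with $\|\log\tau_n\|_\infty\le n(C+o(1))$, and for an arbitrary minimiser $\sigma_n^\star\in\cF_n$ of $k_n(\eps)$ set $\sigma_n':=\tfrac12\sigma_n^\star+\tfrac12\tau_n\in\cF_n$ (by convexity). Anti-monotonicity of $D_{\max}$ in its second argument yields $\dmaxeps{\rho^{\otimes n}}{\sigma_n'}{\eps}\le k_n(\eps)+1$, while $\sigma_n'\ge\tfrac12\tau_n$ forces $\|(\sigma_n')^{-1}\|_\infty\le 2^{n(C+o(1))+1}$. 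For such a well-conditioned reference, combining the variational lower bound $\dmaxeps{\rho^{\otimes n}}{\sigma_n'}{\eps}\ge \min_{\rho'\in\ball{\eps}{\rho^{\otimes n}}}\relent{\rho'}{\sigma_n'}$ with Winter's continuity inequality (whose error scales as $\eps\log\|(\sigma_n')^{-1}\|_\infty + \eps n\log|M|$) gives
\[
k_n(\eps)\ge \relent{\rho^{\otimes n}}{\sigma_n'}-O\bigl(\eps n(C+\log|M|)\bigr)-O(1).
\]
Because $\sigma_n'\in\cF_n$ and $E(\rho^{\otimes n})$ is subadditive in $n$ (using property~1(b) again on optimal choices for $\rho^{\otimes n}$ and $\rho^{\otimes m}$), $\relent{\rho^{\otimes n}}{\sigma_n'}\ge E(\rho^{\otimes n})\ge nE^\infty(\rho)$. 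Dividing by $n$ and sending $\eps\to 0$ closes the converse.

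The main obstacle is this converse: the minimum in $k_n(\eps)$ ranges over \emph{all} of $\cF_n$, so the optimal $\sigma_n^\star$ may be highly non-product and arbitrarily close to singular, and a naive continuity argument would produce a correction that dominates $E^\infty(\rho)$. The hypothesis $C(\cF)<\infty$ is exactly what lets us perturb $\sigma_n^\star$ to a well-conditioned free state in $\cF_n$ at additive cost $O(1)$ in $D_{\max}^\eps$, after which the continuity of relative entropy---whose error grows with $\log\|\sigma^{-1}\|_\infty$---can be invoked to pass from $D_{\max}^\eps$ back to $D$ and hence to $E^\infty(\rho)$. A heavier alternative would be to appeal directly to the generalised quantum Stein's lemma of Brand\~{a}o--Plenio, which already delivers $\lim_n \tfrac1n D_H^\eps(\rho^{\otimes n}\|\cF_n)=E^\infty(\rho)$ under comparable conditions and can be transferred from $D_H^\eps$ to $D_{\max}^\eps$ using standard smooth-divergence relations.
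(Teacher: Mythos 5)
Your proposal is correct, and its overall architecture coincides with the paper's: invoke Theorem~\ref{thm_oneRER} on $\rho^{\otimes n}$, control the one-shot quantity $k_n(\eps)$ from above via the AEP applied to block-product references $\sigma_m^{\otimes k}\in\cF_{mk}$ (the paper uses the second-order expansion of Fact~\ref{dmaxequi} for the same purpose), and from below via $\dmax{\cdot}{\cdot}\geq\relent{\cdot}{\cdot}$ followed by an asymptotic-continuity argument that consumes the hypothesis $C(\cF)<\infty$. The one place where your route genuinely differs is the continuity step in the converse: the paper exchanges the two minimisations and cites the Donald--Horodecki lemma (Lemma~\ref{relentresourcecont}) for the function $\rho\mapsto\inf_{\sigma\in\cF}\relent{\rho}{\sigma}$, whose error term already carries $\inf_{\tau\in\cF}\|\log\tau\|_\infty$, whereas you re-derive that lemma in a self-contained way by mixing the possibly ill-conditioned minimiser with a well-conditioned free state, $\sigma_n'=\tfrac12\sigma_n^\star+\tfrac12\tau_n$, paying an additive $1$ in $\mathrm{D}_{\max}$ and then applying continuity of $\relent{\cdot}{\sigma_n'}$; both yield the same $\eps\left(\log|M|+C(\cF)\right)$-type correction, so the choice is one of citation versus explicit construction. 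Two small points worth making explicit in your write-up: your last inequality needs $E(\rho^{\otimes n})\geq nE^\infty(\rho)$, which does hold because property~1(b) makes $n\mapsto E(\rho^{\otimes n})$ subadditive and hence $E^\infty(\rho)=\inf_n\tfrac1nE(\rho^{\otimes n})$ by Fekete's lemma; and your direct application of the one-shot achievability uses a catalyst of roughly $2^{nE^\infty(\rho)}$ copies, whereas the paper's Theorem~\ref{theo:asymptotictransform} additionally blocks the input to keep the catalyst polynomial in $n$ --- but since Definition~\ref{def:asymptotic} imposes no bound on catalyst size, this does not affect the correctness of the rate statement.
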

It completely characterizes the per copy randomness requirement in the asymptotic i.i.d. setting.
The proof is given in Section~\ref{sec_asymp}. Moreover, we show that the number of qubits of catalytic free resource grows only polynomially in $n$ (Theorem \ref{theo:asymptotictransform}).

\subsection{Multi-partite case}

Our formalism also extends to the multi-partite case. We consider the case of $t$ parties and introduce some minor modifications to Definition \ref{def:generictransform} and Assumption \ref{assumeperm}. Let $\cH$ be the global Hilbert space in the multiparty setting. We start with a decomposition of $\cH = \cH_{1} \otimes \cH_2 \ldots \otimes \cH_r$, where $\cH_i$ is a Hilbert space shared by all the $t$ parties.

\begin{definition}[Multi-partite resource theory]
\label{def:multigenerictransform}
Fix an integer $r\geq 1$. Given $$\cH = \cH_1 \otimes \cH_2 \ldots \otimes \cH_r$$ and registers $M^1_1M^2_1\ldots M^t_1$, $M^1_2M^2_2\ldots M^t_2$, $\ldots, M^1_rM^2_r\ldots M^t_r$, where $M^1_iM^2_i\ldots M^t_i$ corresponds to $\cH_i$ and $M^j_i$ is held by $j$-th party, the set of free resources and free operations are as follows.

\begin{enumerate}
\item For every non empty ordered set $$\bL = \{s_1,s_2,\ldots s_{|\bL|}\}\subseteq \{1,2,\ldots r\}$$ let $\cF_\bL \subseteq \cD(\cH_{s_1}\otimes \cH_{s_2} \otimes \ldots \otimes\cH_{s_{\bL}})$ satisfy the following properties. For brevity, we let $M_{\bL}$ be the register corresponding to set $\bL$.
\begin{enumerate}
\item $\cF_\bL$ is a convex and closed set. 
\item For any $\bL' \subseteq \bL$, $\cF_{\bL'}\otimes \cF_{\bL\setminus \bL'} \subseteq \cF_\bL$.
\item For any $\sigma_{M_{\bL}} \in \cF_{\bL}$ and $\bL'  \subseteq \bL$,  $\Tr_{M_{\bL'}}(\sigma_{M_{\bL}}) \in \cF_{\bL\setminus \bL'}$.

\end{enumerate}
\item The complete set of free resources is $\cF := \cup_{\bL}\cF_\bL$.

\item For a fixed $\bL$, the free operations $\cU_\bL$ are defined as the set of all unitaries $U$ that satisfy
\begin{enumerate}
\item $\sigma\in \cF_\bL \implies U\sigma U^{\dagger}\in \cF_\bL$. 
\item $U\in \cU_\bL $ if and only if $U^{\dagger}\in \cU_\bL $.
\end{enumerate}
\item The set of all free operations are $\cU:= \cup_\bL \cU_\bL$.
\end{enumerate}

\end{definition}

In an analogous fashion to Task \ref{catalytictransform}, we define the multi-partite task.

\begin{task}[An $(\eps, \log|J|, t)$-transformation of $\rho_{M^1M^2\ldots M^t}$ to $\cF$]
\label{multicatalytictransform}
Let $\eps >0$, $r\geq 1$ be an integer and fix a Hilbert space $\cH$ with decomposition $\cH_{1} \otimes \cH_2 \ldots \otimes \cH_r$, which is chosen by the experimenters. Let the register $M^1M^2\ldots M^t$ correspond to $\cH_{M^1M^2\ldots M^t} = \cH_{1}$, register $J_1J_2\ldots J_t$ correspond to $\cH_{J_1J_2\ldots J_t} = \cH_{r}$ (with register $J_i$ held by $i$-th party) and register $E$ correspond to $\cH_E = \cH_{2} \otimes \cH_{3} \ldots \otimes \cH_{r-1}$. The experimenters share a quantum state $\rho_{M^1M^2\ldots M^t}$. Using a state $\mu_{EJ_1J_2\ldots J_t} \in \cF$, they apply the unitary $U \in \cU$  to obtain a joint quantum state $\Theta_{M^1M^2\ldots M^tEJ_1J_2\ldots J_t}$:
\[
\Theta_{M^1M^2\ldots M^tEJ_1J_2\ldots J_t} = U (\rho_{M^1M^2\ldots M^t}\otimes\mu_{EJ_1J_2\ldots J_t}) U^\dagger.
\] 
It is required that there exists a $\sigma_{M^1M^2\ldots M^tE}\in \cF$ such that $\Pur(\Theta_{M^1M^2\ldots M^tE}, \sigma_{M^1M^2\ldots M^tE})\leq \eps$, where the chosen distant measure is defined in Equation (\ref{eq_dist}). 
\end{task}

We can similarly define a catalytic transformation in this framework.

\begin{definition}
\label{actualmulticatalytic}
Task \ref{multicatalytictransform} is said to be a $(\eps, \log|J|, t)$-\textit{catalytic transformation} of $\rho_{M^1M^2\ldots M^t}$ if $\mu_{EJ_1J_2\ldots J_t} = \mu_E\otimes  \mu_{J_1J_2\ldots J_t}$ and $\sigma_{M^1M^2\ldots M^tE} = \sigma_{M^1M^2\ldots M^t}\otimes \mu_E$ for some $\sigma_{M^1M^2\ldots M^t}\in \cF$. 
\end{definition}

We will also need the following assumption for achievability protocol.

\begin{assume}
\label{assumepermmul}
Invoke the notation from Definition \ref{def:multigenerictransform}. There exists a canonical basis $\cB_j = \{\ket{1}, \ket{2}, \ldots \ket{\ell_r}\}$ on $\cH^j_{r}$ (where the dimension $\ell_r$ of $\cH^j_{r}$ is independent of $j$) such that 
$$\cF= \text{conv}\left\{\frac{1}{\ell_r}\sum_{k=1}^{\ell_r} U_k\sigma U_k^{\dagger}\otimes\ketbra{k}_{M^1_r}\otimes \ketbra{k}_{M^2_r}\otimes\ldots \ketbra{k}_{M^t_r}: U_k \in \cU_{\{1,2,\ldots r-1\}}\right\}$$ 
and experimenter only performs an operation from the set $$\left\{\sum_k U_k \otimes \ketbra{k}_{M^1_r}\otimes \ketbra{k}_{M^2_r}\otimes \ldots \ketbra{k}_{M^t_r}:  U_k \in \cU_{\{1,2,\ldots r-1\}} \right\}.$$
For brevity, define $$\id_{\ell_r,t} := \frac{1}{\ell_r}\sum_{k=1}^{\ell_r}\ketbra{k}_{M^1_r}\otimes \ketbra{k}_{M^2_r}\otimes\ldots \ketbra{k}_{M^t_r}.$$
Moreover, suppose it holds that $\cF_{\{j\}} = \cF_{\{1\}}$ for all $j<r$. Then $$\sigma_{M^1_1M^2_1\ldots M^t_1, M^1_2M^2_2\ldots M^t_2,\ldots M^1_rM^2_r\ldots M^t_r}\in \cF$$  implies $$\sigma_{M^1_jM^2_j\ldots M^t_j, M^1_2M^2_2\ldots M^t_2,\ldots M^1_{j-1}M^2_{j-1}\ldots M^t_{j-1}, M^1_1M^2_1\ldots M^t_1, M^1_{j+1}M^2_{j+1}\ldots M^t_{j+1}, \ldots M^1_rM^2_r\ldots M^t_r}\in \cF$$ for all $j<r$.

\end{assume}

We are now in a position to prove the following theorem.

\begin{theorem} \label{thm_multiRER}
Fix $\epsilon,\delta>0$, and a quantum state $\rho_{M^1M^2\ldots M^t}$. 
\begin{itemize}
\item Achievability: Suppose Assumption \ref{assumepermmul} holds. There exists an $(\eps+\delta, k + 2\log\frac{1}{\delta}, t)$-catalytic transformation of $\rho_{M^1M^2\ldots M^t}$ to $\cF$, where $k := \min_{\sigma_{M^1M^2\ldots M^t}\in \cF}\dmaxeps{\rho_{M^1M^2\ldots M^t}}{\sigma_{M^1M^2\ldots M^t}}{\eps}$.
\item Converse: For every $(\eps, \log|J|, t)$-transformation of $\rho_{M^1M^2\ldots M^t}$ to $\cF$, it holds that 
$$\log |J| \geq \min_{\sigma_{M^1M^2\ldots M^t}\in \cF}\dmaxeps{\rho_{M^1M^2\ldots M^t}}{\sigma_{M^1M^2\ldots M^t}}{\eps},$$ if Assumption \ref{assumepermmul} is true.
\end{itemize}

\end{theorem}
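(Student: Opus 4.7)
\textbf{Plan for Theorem \ref{thm_multiRER}.} The proof parallels Theorem \ref{thm_oneRER}, with the shared GHZ-type classical register $J_1 J_2 \ldots J_t$ enforced by Assumption \ref{assumepermmul} playing the role of the single-partite classical $J$.

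For the achievability direction, I would pick $\sigma \in \cF$ attaining $k := \dmaxeps{\rho_{M^1M^2\ldots M^t}}{\sigma}{\eps}$, take a witness $\rho'$ with $\rho' \leq 2^k \sigma$ and $\Pur(\rho, \rho') \leq \eps$, and set $n := \lceil 2^k/\delta^2 \rceil$. Let $E$ consist of $n-1$ copies of an $M^1\ldots M^t$-type register, and take the catalyst
\[
\mu_{E J_1\ldots J_t} \;:=\; \sigma^{\otimes (n-1)}_E \otimes \frac{1}{n}\sum_{j=1}^{n}\ketbra{j}_{J_1}\otimes\ldots\otimes \ketbra{j}_{J_t},
\]
which lies in $\cF$ by Definition \ref{def:multigenerictransform} and the structural form of Assumption \ref{assumepermmul} (with trivial $U_k$'s). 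The free unitary $U$ is the controlled swap $\sum_{j=1}^{n} S_j \otimes \ketbra{j}_{J_1}\otimes\ldots\otimes \ketbra{j}_{J_t}$, where $S_j$ swaps the input slot $M^1\ldots M^t$ with the $(j-1)$-th slot of $E$ (and $S_1$ is the identity). Each $S_j$ factors as a tensor product of local swaps, so $U$ belongs to the operation class in Assumption \ref{assumepermmul}. Applying the convex split lemma of \cite{AnshuDJ14} to the $n$ slots of $M^1\ldots M^t$-type registers, the output marginal over those slots is $\delta$-close in purification distance to $\sigma^{\otimes n} = \sigma \otimes \mu_E$. The triangle inequality with $\rho \approx_\eps \rho'$ gives total error $\eps + \delta$ and randomness cost $\log n \leq k + 2\log(1/\delta)$, and since $\mu_{J_1\ldots J_t}$ is returned unchanged this is a catalytic transformation in the sense of Definition \ref{actualmulticatalytic}.

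For the converse, fix an $(\eps, \log|J|, t)$-transformation $\Theta = U(\rho \otimes \mu)U^\dagger$ with $\Pur(\Theta_{M^1\ldots M^tE}, \sigma_{M^1\ldots M^tE}) \leq \eps$ for a free $\sigma_{M^1\ldots M^tE}$. Assumption \ref{assumepermmul} gives two structural facts: (i) $\id_{\ell_r,t} \in \cF$, so by property 1(b) the product $\sigma_{M^1\ldots M^tE} \otimes \id_{\ell_r,t}$ is free; and (ii) $\Theta$ is classical on $J_1\ldots J_t$ in the shared diagonal basis, so $\Theta = \sum_k p_k \Theta^k_{M^1\ldots M^tE} \otimes \ketbra{k}_{J_1}\otimes \ldots \otimes \ketbra{k}_{J_t}$. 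Passing to a smoothed $\Theta'$ (within purification distance $\eps$) whose $M^1\ldots M^tE$-marginal equals $\sigma_{M^1\ldots M^tE}$, the classicality yields $p_k \Theta'^k \leq \sigma_{M^1\ldots M^tE}$, whence reading off each diagonal $J$-block gives
\[
\dmaxeps{\Theta_{M^1\ldots M^tEJ_1\ldots J_t}}{\sigma_{M^1\ldots M^tE}\otimes \id_{\ell_r,t}} \;\leq\; \log |J|.
\]
Applying the free unitary $U^\dagger$ (invariance of $\dmax$) and tracing out $EJ_1\ldots J_t$, data processing for smoothed max-relative entropy yields
\[
\dmaxeps{\rho_{M^1\ldots M^t}}{\sigma'_{M^1\ldots M^t}} \;\leq\; \log|J|,
\]
with $\sigma' := \Tr_{EJ_1\ldots J_t} U^\dagger(\sigma_{M^1\ldots M^tE}\otimes \id_{\ell_r,t}) U \in \cF$ by property 1(c).

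The main technical obstacles are the two structural consequences of Assumption \ref{assumepermmul}: in the achievability, verifying that the product of single-party swaps controlled by the shared classical register is indeed a bona fide element of $\cU$ (so that coordination by shared randomness is truly free across parties); and in the converse, verifying that every valid $\Theta$ is supported on the diagonal $\ketbra{k}_{J_1}\otimes\ldots\otimes \ketbra{k}_{J_t}$, which is what turns the smoothed max-divergence bound into the clean $\log|J|$ estimate. Once these facts are in place, the argument reduces, block by block, to the single-partite proof of Theorem \ref{thm_oneRER}.
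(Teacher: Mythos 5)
Your proposal is correct and follows essentially the same route as the paper: the achievability is the controlled-swap plus convex-split argument with the perfectly correlated classical register $\id_{n,t}$ as the shared randomness (the paper's Corollary \ref{cor:multipartite}), and the converse smooths $\Theta$ to a classical-quantum extension of $\sigma_{M^1\ldots M^tE}$ supported on the diagonal of $\id_{|J|,t}$, bounds its max-relative entropy to $\sigma\otimes\id_{|J|,t}$ by $\log|J|$, and pulls back through $U^\dagger$ (the paper's Corollary \ref{cor:converse}). The only differences are cosmetic bookkeeping ($n-1$ versus $n$ catalyst copies, and making the smoothing witness explicit rather than leaving it inside the convex-split lemma).
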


The proof of this theorem follows similar to the proof of Theorem \ref{thm_oneRER} and is given in Corollaries \ref{cor:multipartite} and \ref{cor:converse} in Section \ref{sec_multioneshot}.

\subsection{Known resource theories}
To conclude this section, we show that our general resource framework includes at least the following individual resource theory. 

\begin{enumerate}
\item{Resource theory of entanglement \cite{Horodecki2009, Groisman05, BrandaoPlenio08, ChitambarHsieh16}:} In this resource theory, the set of free states, $\cF$, is the collection of separable states. The perfectly correlated state $\id_{\ell, 2}$ is a separable quantum state for all $\ell\geq 1$, and hence is a free resource. The formalism also extends to multi-partite entanglement, where the free resources are convex combination of product quantum states.

\item{Resource theory of coherence \cite{BaumgratzCP14, WinterYang16, ChitambarHsieh16, Streltsov16}:} In this resource theory, the set of free states, $\cF$, is the collection of diagonal states in a pre-determined basis. This is captured by Definition \ref{def:generictransform}, where we choose a basis on each Hilbert space $\cH_i$ and take the free resources $\cF_{\{i\}}$ to be the set of diagonal quantum states. The properties such as being closed under partial trace and tensor product are easily seen to be satisfied. Moreover, maximally mixed state is diagonal in any basis, and hence belongs to the free resources.

\item{Resource theory of asymmetry \cite{Wakakuwa17}:} In this resource theory, the set of free resources are the states that are invariant under some group transformation. To construct $\cF_{\{\ell\}}$ for some $\ell\geq 2$, one fixes a group $G$ with a unitary representation $\{U_g\}_{g\in G}$ in dimension $\ell$. The set $\cF_{\{\ell\}}$ is the collection of states $\sigma$ that satisfy $U_g\sigma U^{\dagger}_g = \sigma$. The resource theory can then be constructed on tensor product of such Hilbert spaces (see Section $2.B$ in \cite{Wakakuwa17}).  To verify the partial trace condition (Item $1(c)$), let $\sigma_{M_1M_2}$ be such that $(U_g\otimes V_{g'})\sigma_{M_1M_2} (U^{\dagger}_g\otimes V^{\dagger}_{g'}) = \sigma_{M_1M_2}$. Then 
\begin{eqnarray*}
\sigma_{M_1}&=& \Tr_{M_2}(\sigma_{M_1M_2})=\Tr_{M_2}\left(U_g\otimes V_{g'})\sigma_{M_1M_2} (U^{\dagger}_g\otimes V^{\dagger}_{g'})\right) \\ &=& \Tr_{M_2}\left(U_g\otimes V^{\dagger}_{g'}V_{g'})\sigma_{M_1M_2} (U^{\dagger}_g\otimes \id)\right) = \Tr_{M_2}\left(U_g\otimes \id)\sigma_{M_1M_2} (U^{\dagger}_g\otimes \id)\right)\\ &=& U_g\sigma_{M_1}U^{\dagger}_g.
\end{eqnarray*}
Symmetry under permutation of registers of identical dimension can also be verified in similar way, as $\sigma_{M_1M_2}$ must be invariant under $U_g\otimes U_{g'}$ for all $g,g'$.  

\item{Resource theory of nonuniformality \cite{GOUR20151} and purity  \cite{PhysRevA.67.062104}:} In this resource theory, the only  free state is the completely mixed state. This is easily captured by Definition \ref{def:generictransform}. This is equivalent to the formalism of randomness extractors \cite{Trevisan01, TrevisanV00}.  

\item{Resource theory of Thermodynamics \cite{BrandaoHORS13, BrandaoHNOW15, HorodeckiO15, FaistDJR15, GourMNSH15, VarunG15}:} In this resource theory, the free states are Gibbs quantum states, that are states of the form $\rho_\beta(H)=\frac{e^{-\beta H}}{\Tr(e^{-\beta H})}$, for an arbitrary Hamiltonian $H > 0$. To capture it in Definition \ref{def:generictransform}, we assign a Gibbs state as the only free resources in the set $\cF_{\{i\}}$ (for all $i$). Thus, for a system with $r$ registers, the free resource is $\otimes_{i=1}^r \rho_\beta(H_i)$. The condition that $U \otimes_{i=1}^r \rho_\beta(H_i) U^{\dagger} = \otimes_{i=1}^r \rho_\beta(H_i)$ is equivalent to the condition $[U,\sum_i H_i]=0$.  Moreover, this also implies $[U^{\dagger},\sum_i H_i]=0$. Finally, the maximally mixed state is a Gibbs quantum state with Hamiltonian $H=0$. This establishes consistency with Definition \ref{def:generictransform}. 

Theorem \ref{coro_RER} applies as long as $C(\cF) < \infty$. For a Gibbs state $\rho_{\beta}(H)$, it holds that $$\|\log \rho_{\beta}(H)\|_\infty \leq \|\beta H\|_\infty + \log\Tr(e^{-\beta H}) \leq \|\beta H\|_\infty + \log d,$$ where $d$ is the support size of the Gibbs state. Thus, Theorem \ref{coro_RER} applies for bounded $\|\beta H\|_\infty$.

\item{Resource theory of contextuality \cite{HorodeckiGJKL15}:}  In this resource theory, the free resources are the set of conditional probability distributions that can be described `classically' (a notion that is made more precise in \cite{HorodeckiGJKL15}). All the conditional probability distributions in this theory satisfy a consistency condition (which can be viewed as an analogue of the non-signaling condition in the case of non-locality). It can be observed that the set of free resources satisfies the conditions in Definition \ref{def:generictransform}, being convex and closed under permutations and tensor product. Uniform probability distribution (or the maximally mixed state) also belongs to the free resource.

\item{Resource theory of stabilizer computation \cite{VeitchMGE14}:} In this resource theory, the free resources are the convex hull of the set of all pure states that can be generated by the action of a Clifford unitary on a standard state (for example, $\ket{0}$). Thus, maximally mixed state belongs to this set. This set is convex and closed. The free resources can be extended to the tensor product of registers along the lines as discussed in \cite{Wakakuwa17}. 

\end{enumerate}

\section{Preliminaries}\label{sec_Pre}


Let $\cH_A$ be the Hilbert space associated to a register $A$. Let $|A|$ denote the dimension of $\cH_A$. Let $\mathcal{D}(\cH_A)$ be the set of all normalized quantum states acting on $\cH_A$. For two subsets $\cF_A \subseteq \mathcal{D}(\cH_A), \cF_B \subseteq \mathcal{D}(\cH_B)$, let $\cF_A\otimes \cF_B:= \{\rho_A\otimes \sigma_B: \rho_A\in \cF_A, \sigma_B\in \cF_B\}$. Let $U: \cH_A\rightarrow \cH_A$ be a unitary operator acting on $\cH_A$. 

The definitions below have been adapted from the references \cite{umegaki62,Tomamichel12, TomHay13, li2014, Datta09, JainRS02, GilchristLN05}
For $\rho_A,\sigma_A\in \mathcal{D}(\cH_A)$ such that $\text{supp}(\rho_A) \subset \text{supp}(\sigma_A)$, its relative entropy is given by
\[
\relent{\rho_A}{\sigma_A} := \Tr(\rho_A\log\rho_A) - \Tr(\rho_A\log\sigma_A),
\] 
and is equal to infinity otherwise. Similarly, the relative entropy variance of $\rho_A,\sigma_A\in \mathcal{D}(\cH_A)$ is defined as
\[
\varrelent{\rho}{\sigma} := \Tr(\rho(\log\rho - \log\sigma)^2) - (\relent{\rho}{\sigma})^2,
\]
when $\text{supp}(\rho_A) \subset \text{supp}(\sigma_A)$.

The max-relative entropy of $\rho_A,\sigma_A\in \mathcal{D}(\cH_A)$ such that $\text{supp}(\rho_A) \subset \text{supp}(\sigma_A)$ is 
\[ 
\dmax{\rho_A}{\sigma_A}  :=  \inf \{ \lambda \in \mathbb{R} : 2^{\lambda} \sigma_A \geq \rho_A \}.
\]  
 The smooth max-relative entropy $\dmaxeps{\rho_A}{\sigma_A}{\eps}$ is 
 \[
 \dmaxeps{\rho_A}{\sigma_A}{\eps}  :=  \sup_{\rho'_A\in \ball{\eps}{\rho_A}} \dmax{\rho_A'}{\sigma_A},
 \]
 where the $\varepsilon$-ball, $\ball{\eps}{\rho_A} := \{\rho'_A|~\Pur(\rho_A,\rho'_A) \leq \varepsilon\}$, is defined via the purified distance
\begin{equation}\label{eq_dist}
\Pur(\rho_A,\sigma_A) := \sqrt{1-\F^2(\rho_A,\sigma_A)}.
\end{equation}
In the above, $\F(\rho_A,\sigma_A):=\|\sqrt{\rho_A}\sqrt{\sigma_A}\|_1$ is the fidelity between two states.

We will use the following facts. 
\begin{fact}[Triangle inequality for purified distance, ~\cite{GilchristLN05,Tomamichel12}]
\label{fact:trianglepurified}
For quantum states $\rho_A, \sigma_A, \tau_A$,
$$\Pur(\rho_A,\sigma_A) \leq \Pur(\rho_A,\tau_A)  + \Pur(\tau_A,\sigma_A) . $$ 
\end{fact}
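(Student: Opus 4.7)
The plan is to pass through the \emph{angle} $\theta(\rho,\sigma) := \arccos \F(\rho,\sigma)\in [0,\pi/2]$; since $\Pur(\rho,\sigma)=\sin\theta(\rho,\sigma)$ by definition, a triangle inequality for $\theta$ together with a short calculus step involving the subadditivity of $\sin$ on $[0,\pi/2]$ will yield the inequality for $\Pur$.

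First I would apply Uhlmann's theorem to set up common purifications on a sufficiently large reference register $R$: fix a purification $\ket{\psi_\tau}$ of $\tau_A$ on $\cH_A\otimes \cH_R$, and choose purifications $\ket{\psi_\rho}$ of $\rho_A$ and $\ket{\psi_\sigma}$ of $\sigma_A$ on the same space so that $|\braket{\psi_\rho}{\psi_\tau}|=\F(\rho,\tau)$ and $|\braket{\psi_\tau}{\psi_\sigma}|=\F(\tau,\sigma)$. The maximization characterization $\F(\rho,\sigma)=\max_{\ket{\phi_\rho},\ket{\phi_\sigma}}|\braket{\phi_\rho}{\phi_\sigma}|$ over purifications then gives $\F(\rho,\sigma)\geq |\braket{\psi_\rho}{\psi_\sigma}|$, equivalently $\theta(\rho,\sigma)\leq \arccos |\braket{\psi_\rho}{\psi_\sigma}|$. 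It thus suffices to prove the Fubini--Study-type inequality $\arccos|\braket{\psi_\rho}{\psi_\sigma}|\leq \arccos|\braket{\psi_\rho}{\psi_\tau}|+\arccos|\braket{\psi_\tau}{\psi_\sigma}|$ for unit vectors. I would do this by a phase-adjustment trick: rephase $\ket{\psi_\rho}$ and $\ket{\psi_\sigma}$ so that $\braket{\psi_\rho}{\psi_\tau}$ and $\braket{\psi_\tau}{\psi_\sigma}$ are non-negative reals, view the three vectors as unit vectors in the real Euclidean space underlying $\cH_A\otimes\cH_R$ (with the real part of the complex inner product), and invoke the spherical triangle inequality for arc length there; a final application of $\mathrm{Re}\,\braket{\psi_\rho}{\psi_\sigma}\leq|\braket{\psi_\rho}{\psi_\sigma}|$ replaces the real part by the modulus on the left-hand side.

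Finally I would transfer this inequality through the sine. If $\theta(\rho,\tau)+\theta(\tau,\sigma)\leq \pi/2$, monotonicity of $\sin$ on $[0,\pi/2]$ and the identity $\sin(\alpha+\beta)=\sin\alpha\cos\beta+\cos\alpha\sin\beta\leq \sin\alpha+\sin\beta$ finish the proof; otherwise the estimate $\sin\alpha+\sin\beta=2\sin(\tfrac{\alpha+\beta}{2})\cos(\tfrac{\alpha-\beta}{2})\geq 1$, valid for $\alpha,\beta\in[0,\pi/2]$ with $\alpha+\beta\geq \pi/2$, gives $\sin\theta(\rho,\tau)+\sin\theta(\tau,\sigma)\geq 1\geq \sin\theta(\rho,\sigma)$. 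The most delicate step is the phase adjustment, since the phases of all three inner products cannot be simultaneously controlled; however, this is precisely what the intermediate inequality comparing real part and modulus is designed to absorb, and the remaining steps are routine.
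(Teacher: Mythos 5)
Your argument is correct, and it is essentially the standard proof of this fact from the cited references (the paper itself states it without proof): one shows that the Bures angle $\arccos\F$ is a metric via Uhlmann's theorem and the Fubini--Study triangle inequality for purifications, and then transfers the inequality through $\sin$ using subadditivity on $[0,\pi/2]$ together with the observation that $\Pur\leq 1$ handles the case where the angles sum to more than $\pi/2$. All the delicate points you flag (the phase adjustment only controlling two of the three inner products, and the $\mathrm{Re}$ versus modulus comparison absorbing the third) are handled correctly.
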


\begin{fact}[Uhlmann's Theorem, \cite{uhlmann76}]
\label{uhlmann}
Let $\rho_A,\sigma_A\in \mathcal{D}(\cH_A)$. Let $\ketbra{\rho}_{AB}\in \mathcal{D}(\cH_{AB})$ be a purification of $\rho_A$. There exists a purification $\ketbra{\theta}_{AB}$ of $\theta_A$ such that,
 $$\F(\ketbra{\theta}_{AB}, \ketbra{\rho}_{AB}) = \F(\rho_A,\sigma_A).$$
\end{fact}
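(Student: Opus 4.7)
The plan is to reduce Uhlmann's theorem to the variational formula
\[
\|X\|_1 = \max\{|\Tr(XM)| : \|M\|_\infty \leq 1\},
\]
which is attained at $M = U^\dagger$ where $X = U|X|$ is the polar decomposition. The bridge between the two is a canonical parameterization of purifications. Fix an auxiliary register $A'$ with $|A'| = |A|$ and the unnormalized maximally entangled vector $\ket{\Omega}_{AA'} = \sum_i \ket{i}_A \ket{i}_{A'}$ in a chosen basis. Any purification of a state $\tau_A$ on $\cH_{AB}$ can be written as $(\sqrt{\tau_A}\otimes V_{A'\to B})\ket{\Omega}_{AA'}$ for some isometry $V_{A'\to B}$, and the existence of a purification on $\cH_{AB}$ forces $|B|\geq |A|$.

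Next, I would plug two such purifications $\ket{\rho}_{AB} = (\sqrt{\rho_A}\otimes V^\rho)\ket{\Omega}$ and $\ket{\theta}_{AB} = (\sqrt{\sigma_A}\otimes V^\theta)\ket{\Omega}$ into the inner product. Applying the transpose trick $(X_A\otimes\id_{A'})\ket{\Omega} = (\id_A\otimes X^T_{A'})\ket{\Omega}$ together with $\bra{\Omega}(\id\otimes M)\ket{\Omega} = \Tr(M)$ collapses everything to
\[
\braket{\rho}{\theta}_{AB} \;=\; \Tr\!\bigl(\sqrt{\rho_A}\sqrt{\sigma_A}\, M\bigr), \qquad M := \bigl((V^\rho)^\dagger V^\theta\bigr)^T.
\]
Because $V^\rho, V^\theta$ are isometries, the operator $M$ is a contraction, $\|M\|_\infty\leq 1$. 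Applying the variational formula to $X = \sqrt{\rho_A}\sqrt{\sigma_A}$ immediately yields the upper bound $|\braket{\rho}{\theta}_{AB}| \leq \|\sqrt{\rho_A}\sqrt{\sigma_A}\|_1 = \F(\rho_A,\sigma_A)$ for \emph{every} purification $\ket{\theta}_{AB}$ of $\sigma_A$.

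For the matching lower bound, I would fix $V^\rho$ arbitrarily, take the polar decomposition $\sqrt{\rho_A}\sqrt{\sigma_A} = U P$ with $P\geq 0$, and construct $V^\theta$ so that $M$ equals $U^\dagger$ on the range of $P$. Concretely, set $V^\theta := V^\rho W$ where $W$ is any unitary on $A'$ whose transpose restricts to $U^\dagger$ on the relevant subspace and is extended arbitrarily to its orthogonal complement; the bound $|B|\geq |A|$ ensures $V^\theta$ is a legitimate isometry into $\cH_B$. The resulting $\ket{\theta}_{AB}$ purifies $\sigma_A$ and achieves $|\braket{\rho}{\theta}_{AB}| = \F(\rho_A,\sigma_A)$, which proves the claim (a global phase adjustment takes $|\braket{\rho}{\theta}|$ to $\braket{\rho}{\theta}$ if desired, or one may pass to the purified distance / fidelity via $\F(\ketbra{\rho},\ketbra{\theta}) = |\braket{\rho}{\theta}|$).

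The main subtlety, rather than a genuine obstacle, is bookkeeping the transposition arising from the natural isomorphism $\cH_{AA'}\cong \cH_A\otimes \cH_A^*$ and verifying that the prescribed contraction $M$ is realizable as $((V^\rho)^\dagger V^\theta)^T$ for a bona fide pair of isometries into $\cH_B$; the dimension bound $|B|\geq|A|$ is exactly what makes this possible. Compactness of the isometry group guarantees the supremum is attained, legitimizing the ``there exists'' in the statement.
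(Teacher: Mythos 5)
The paper does not prove this fact at all: it is imported verbatim as a standard result with a citation to Uhlmann's original paper, so there is no in-paper argument to compare against. Your proposal is the standard modern textbook proof (parameterize purifications via $\ket{\Omega}_{AA'}$, use the transpose trick to reduce the overlap to $\Tr(\sqrt{\rho_A}\sqrt{\sigma_A}\,M)$ with $M$ a contraction, and saturate the trace-norm variational formula with the polar isometry), and the computation is correct: $\braket{\rho}{\theta} = \Tr\bigl(\sqrt{\rho_A}\sqrt{\sigma_A}\,((V^\rho)^\dagger V^\theta)^T\bigr)$, the transpose preserves the operator norm so $\|M\|_\infty\le 1$, and choosing $W=\bar U$ (with $U$ the polar partial isometry of $\sqrt{\rho_A}\sqrt{\sigma_A}$, extended to a unitary) achieves $\|\sqrt{\rho_A}\sqrt{\sigma_A}\|_1$ exactly, with no phase adjustment even needed since the optimal overlap is real and nonnegative. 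The one inaccurate claim is that ``the existence of a purification on $\cH_{AB}$ forces $|B|\ge|A|$'': it only forces $|B|\ge\mathrm{rank}(\rho_A)$, so if $\rho_A$ is rank-deficient and $|B|<\mathrm{rank}(\sigma_A)$ no purification of $\sigma_A$ on $\cH_{AB}$ exists and the theorem as stated fails. This is a standard caveat that the paper's own statement also glosses over (it even writes $\theta_A$ where it means $\sigma_A$); your construction $V^\theta = V^\rho W$ is valid exactly under the hypothesis $|B|\ge|A|$, which is the form in which the fact is actually used in the paper (where $B$ is always a bona fide purifying register), so you should simply state that hypothesis rather than derive it. With that adjustment the proof is complete and self-contained.
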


\begin{fact}
\label{closequantumextension}
Let $\rho_A,\sigma_A \in \mathcal{D}(\cH_A)$ and $\rho_{AB}\in \mathcal{D}(\cH_{AB})$ be an extension of $\rho_A$. Then there exists an extension $\sigma_{AB}$ of $\sigma_A$ such that 
$$\F(\rho_{AB},\sigma_{AB}) = \F(\rho_A,\sigma_A).$$
\end{fact}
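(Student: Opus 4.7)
}
The plan is to reduce the claim to Uhlmann's theorem (Fact \ref{uhlmann}) by lifting both states to a common purifying register and then tracing back down. First I would introduce an auxiliary register $C$ large enough to purify $\rho_{AB}$, and let $\ketbra{\rho}_{ABC}$ be such a purification. Viewing this same pure state as a purification of $\rho_A$ on the composite register $BC$, I would apply Fact \ref{uhlmann} to obtain a purification $\ketbra{\sigma}_{ABC}$ of $\sigma_A$ (again treating $BC$ as the purifying system) that satisfies
\[
\F\!\left(\ketbra{\rho}_{ABC},\ketbra{\sigma}_{ABC}\right) \;=\; \F(\rho_A,\sigma_A).
\]

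Next I would define the candidate extension by $\sigma_{AB} := \Tr_C\!\left(\ketbra{\sigma}_{ABC}\right)$; by construction $\Tr_B(\sigma_{AB}) = \sigma_A$, so it is indeed an extension. The two inequalities then come from monotonicity of fidelity under the partial-trace channel. Tracing out $C$ gives
\[
\F(\rho_{AB},\sigma_{AB}) \;\geq\; \F\!\left(\ketbra{\rho}_{ABC},\ketbra{\sigma}_{ABC}\right) \;=\; \F(\rho_A,\sigma_A),
\]
while tracing out $B$ gives the opposite inequality $\F(\rho_{AB},\sigma_{AB}) \leq \F(\rho_A,\sigma_A)$. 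Combining the two yields equality.

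There is no real obstacle here: the argument is entirely a standard application of Uhlmann plus monotonicity, and the only point that requires a moment's care is making sure the purifying register $C$ is chosen large enough (dimension at least $|A||B|$ suffices) so that Uhlmann's theorem produces the required purification of $\sigma_A$ on the same $ABC$.
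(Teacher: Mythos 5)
Your proof is correct and follows essentially the same route as the paper: purify $\rho_{AB}$ on an auxiliary register $C$, invoke Uhlmann's theorem to get a purification of $\sigma_A$ on $ABC$ at fidelity $\F(\rho_A,\sigma_A)$, and sandwich $\F(\rho_{AB},\sigma_{AB})$ between the two monotonicity inequalities. The only difference is that you spell out the two partial-trace steps separately, which the paper compresses into a single chain of inequalities.
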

\begin{proof}
Introduce a register $C$ and consider a purification $\ketbra{\rho}_{ABC}$ of $\rho_{AB}$. There exists a purification $\ketbra{\sigma}_{ABC}$ of $\sigma_A$ such that $\F(\ketbra{\rho}_{ABC}, \ketbra{\sigma}_{ABC}) = \F(\rho_A,\sigma_A)$, due to Uhlmann's Theorem (Fact \ref{uhlmann}). Thus, $\sigma_{AB}$ is the desired extension due to the relation 
$$\F(\rho_A,\sigma_A) \geq \F(\rho_{AB},\sigma_{AB}) \geq \F(\ketbra{\rho}_{ABC}, \ketbra{\sigma}_{ABC}) = \F(\rho_A,\sigma_A).$$ 
\end{proof}

\begin{fact}
\label{closeextension}
Let $\rho_A,\sigma_A \in \mathcal{D}(\cH_A)$ and $\rho_{AB}\in \mathcal{D}(\cH_{AB})$ be a classical-quantum extension of $\rho_A$, that is $\rho_{AB} = \sum_j p_j \rho^j_A\otimes\ketbra{j}_B$, where $\rho_A = \sum_jp_j\rho^j_A$. Then there exists a classical-quantum extension $\sigma_{AB}$ of $\sigma_A$ such that 
$$\F(\rho_{AB},\sigma_{AB}) = \F(\rho_A,\sigma_A),$$ and $\text{supp}(\sigma_B) \subseteq \text{supp}(\rho_B)$.
\end{fact}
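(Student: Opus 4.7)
My plan is to reduce the statement to Fact~\ref{closequantumextension} by post-processing with a single CPTP map on $B$ that simultaneously (i) forces the output to be classical-quantum in the basis $\{\ket{j}\}$ used in the decomposition of $\rho_{AB}$, and (ii) redirects any mass placed on basis vectors outside $\text{supp}(\rho_B)$ back inside $\text{supp}(\rho_B)$.

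First I would apply Fact~\ref{closequantumextension} to produce some (generally neither classical-quantum nor correctly supported on $B$) extension $\tilde{\sigma}_{AB}$ of $\sigma_A$ with $\F(\rho_{AB},\tilde{\sigma}_{AB})=\F(\rho_A,\sigma_A)$. Write $\mathcal{J} := \{j : p_j > 0\}$, so that $\text{supp}(\rho_B) = \text{span}\{\ket{j} : j\in\mathcal{J}\}$. Fix any $j_0\in\mathcal{J}$ and define the CPTP map
\[
\mathcal{N}_B(X) \;:=\; \sum_{j\in\mathcal{J}} \ketbra{j}_B\, X\, \ketbra{j}_B \;+\; \ketbra{j_0}_B\,\Tr\!\Bigl(\bigl(\id_B - \textstyle\sum_{j\in\mathcal{J}}\ketbra{j}_B\bigr) X\Bigr),
\]
i.e.\ ``measure in the $\{\ket{j}\}$ basis, output $\ket{j}$ if $j\in\mathcal{J}$, and output $\ket{j_0}$ otherwise.'' Set $\sigma_{AB} := (\id_A\otimes\mathcal{N}_B)(\tilde{\sigma}_{AB})$.

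The three required properties then fall out routinely. Every output of $\mathcal{N}_B$ is a $\ketbra{j}_B$ with $j\in\mathcal{J}$, so $\sigma_{AB}$ is classical-quantum in the $\{\ket{j}\}$ basis and its $B$-marginal is supported inside $\mathcal{J}=\text{supp}(\rho_B)$. Since $\mathcal{N}_B$ is trace-preserving, $\Tr_B \sigma_{AB} = \Tr_B \tilde{\sigma}_{AB} = \sigma_A$. Finally, since $\rho_{AB}$ is already classical-quantum with weights on $\mathcal{J}$, a direct computation gives $(\id_A\otimes\mathcal{N}_B)(\rho_{AB}) = \rho_{AB}$; monotonicity of fidelity under CPTP maps then yields
\[
\F(\rho_A,\sigma_A) \;=\; \F(\rho_{AB},\tilde{\sigma}_{AB}) \;\le\; \F\bigl((\id_A\otimes\mathcal{N}_B)(\rho_{AB}),\, (\id_A\otimes\mathcal{N}_B)(\tilde{\sigma}_{AB})\bigr) \;=\; \F(\rho_{AB},\sigma_{AB}),
\]
while monotonicity under $\Tr_B$ gives the matching upper bound $\F(\rho_{AB},\sigma_{AB}) \le \F(\rho_A,\sigma_A)$, forcing equality.

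I do not expect a serious obstacle; the only real design choice is the second term of $\mathcal{N}_B$. Plain dephasing in the $\{\ket{j}\}$ basis would by itself leave $\sigma_B$ with some mass on basis vectors outside $\text{supp}(\rho_B)$ and thereby violate the required support condition, so some mechanism for pushing that mass back into $\mathcal{J}$ is needed. Dumping it onto the fixed vector $\ket{j_0}\in\mathcal{J}$ is the simplest such mechanism and has the side benefit of keeping $\mathcal{N}_B$ a fixed-point map for $\rho_B$ (and hence $\id_A\otimes\mathcal{N}_B$ a fixed-point map for $\rho_{AB}$), which is exactly what drives the fidelity equality above.
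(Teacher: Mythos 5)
Your proof is correct, and its first two-thirds coincide with the paper's: both invoke Fact~\ref{closequantumextension} to get some extension of $\sigma_A$ achieving the fidelity, and both then dephase register $B$ in the $\{\ket{j}\}$ basis, using the fact that $\rho_{AB}$ is a fixed point so that data processing in both directions forces the fidelity equality. The only genuine divergence is how the support condition on $B$ is enforced. The paper first dephases and then conjugates by the projector $\Pi_B$ onto $\text{supp}(\rho_B)$ and renormalizes, which requires an explicit computation with $\sqrt{\rho_{AB}}\,\Pi_B = \sqrt{\rho_{AB}}$ to show the fidelity does not drop; this projection step is not trace-preserving, so strictly speaking one must also check that the $A$-marginal is still (close enough to) $\sigma_A$, a point the paper glosses over. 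Your single CPTP map, which folds the off-support mass onto a fixed $\ket{j_0}\in\mathcal{J}$, is trace-preserving, so $\Tr_B\sigma_{AB}=\sigma_A$ holds exactly and the whole argument reduces to two applications of monotonicity of fidelity. That is a cleaner route to the same statement; the one thing worth making explicit in a final write-up is a Kraus decomposition of $\mathcal{N}_B$ (e.g. $\{\ketbra{j}\}_{j\in\mathcal{J}}\cup\{\ket{j_0}\bra{k}\}_{k\notin\mathcal{J}}$ after extending $\{\ket{j}\}$ to a full basis) to certify that it is indeed CPTP.
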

\begin{proof}
From Fact \ref{closequantumextension}, there exists an extension $\sigma'_{AB}$ of $\sigma_A$ such that $\F(\rho_{AB}, \sigma'_{AB}) = \F(\rho_A,\sigma_A)$. Now, consider the map $\Lambda : B\rightarrow B$ defined as $\Lambda(\omega_B) = \sum_{j} \ketbra{j}\omega_B\ketbra{j}$. Observe that $\id_A \otimes \Lambda_B(\rho_{AB}) = \rho_{AB}$. Define $\sigma''_{AB}:=\id_A\otimes \Lambda_B(\sigma'_{AB})$. It holds that $\sigma''_{AB}$ is a classical-quantum extension of $\sigma_A$. Moreover, $$\F(\rho_A,\sigma_A) \geq \F(\rho_{AB},\sigma''_{AB}) \geq \F(\rho_{AB}, \sigma'_{AB}) = \F(\rho_A,\sigma_A).$$ Let $\Pi_B$ be the projector onto the support of $\rho_B$. Define the quantum state  $\sigma_{AB} := \frac{\Pi_B\sigma''_{AB}\Pi_B}{\Tr(\sigma''_B\Pi_B)}$. Observe that $\sigma_{AB}$ is also a classical-quantum extension of $\sigma_A$ and $\text{supp}(\sigma_B)\in \text{supp}(\rho_B)$. Then 
\begin{eqnarray*}
\F(\rho_{AB}, \sigma_{AB}) &=& \Tr\left(\sqrt{\sqrt{\rho_{AB}}\sigma_{AB}\sqrt{\rho_{AB}}}\right) \\ &=& \frac{1}{\sqrt{\Tr(\sigma''_B\Pi_B)}}\Tr\left(\sqrt{\sqrt{\rho_{AB}}\Pi_B\sigma''_{AB}\Pi_B\sqrt{\rho_{AB}}}\right) \\ 
&=& \frac{1}{\sqrt{\Tr(\sigma''_B\Pi_B)}}\Tr\left(\sqrt{\sqrt{\rho_{AB}}\sigma''_{AB}\sqrt{\rho_{AB}}}\right) \\ &=& 
\frac{1}{\sqrt{\Tr(\sigma''_B\Pi_B)}}\F(\rho_{AB},\sigma''_{AB}) \\ &\geq& \F(\rho_{AB},\sigma''_{AB}).
\end{eqnarray*}
This completes the  proof.
\end{proof}

\begin{fact}
\label{fact:bipartitedmaxquantum}
Let $\Theta_{AB}$ be a bipartite quantum state. Let $\Pi_B$ be the projector onto the support of $\Theta_B$. Then 
$$\Theta_{AB} \preceq |B|\Theta_A\otimes \Pi_B.$$
\end{fact}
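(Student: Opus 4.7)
The plan is to first establish the looser inequality $\Theta_{AB} \preceq |B|\,\Theta_A\otimes \id_B$ by a unitary twirl on the $B$ system, and then sharpen it to $|B|\,\Theta_A\otimes \Pi_B$ by exploiting that $\Theta_{AB}$ is already supported on $\cH_A\otimes \mathrm{supp}(\Theta_B)$, so that conjugation by $\id_A\otimes \Pi_B$ leaves $\Theta_{AB}$ invariant.

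For the looser inequality I would invoke a set of $|B|^2$ unitaries $\{U_k\}$ on $\cH_B$ forming a unitary $1$-design that contains the identity (the Heisenberg--Weyl operators work), i.e.\ satisfying $U_0=\id_B$ and $\sum_k U_k X U_k^{\dagger} = |B|\,\Tr(X)\,\id_B$ for every operator $X$ on $\cH_B$. Applying this identity blockwise to $\Theta_{AB}$ (treated as a matrix over $A$ with entries that are operators on $B$) and collecting terms yields
\[
\sum_k (\id_A\otimes U_k)\,\Theta_{AB}\,(\id_A\otimes U_k^{\dagger}) \;=\; |B|\,\Theta_A\otimes \id_B.
\]
Since every summand on the left is positive semidefinite and the $k=0$ summand is $\Theta_{AB}$ itself, dropping the other terms gives $\Theta_{AB}\preceq |B|\,\Theta_A\otimes \id_B$.

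To sharpen this, I would verify that $(\id_A\otimes \Pi_B)\,\Theta_{AB}\,(\id_A\otimes \Pi_B)=\Theta_{AB}$. Writing $\Theta_{AB}=MM^{\dagger}$ and noting that for any $\ket{\psi}_B$ orthogonal to $\mathrm{supp}(\Theta_B)$ one has $\bra{\psi}\Theta_B\ket{\psi}=\Tr_A\!\bigl(\bra{\psi}MM^{\dagger}\ket{\psi}\bigr)=0$, positivity of the partial trace forces $M^{\dagger}(\id_A\otimes \ket{\psi}_B)=0$, hence $\Theta_{AB}(\id_A\otimes \ket{\psi}_B)=0$. This gives the claimed invariance. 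Conjugating the looser inequality by $\id_A\otimes \Pi_B$ then yields
\[
\Theta_{AB} \;=\; (\id_A\otimes \Pi_B)\,\Theta_{AB}\,(\id_A\otimes \Pi_B) \;\preceq\; (\id_A\otimes \Pi_B)\bigl(|B|\,\Theta_A\otimes \id_B\bigr)(\id_A\otimes \Pi_B) \;=\; |B|\,\Theta_A\otimes \Pi_B,
\]
which is exactly the desired bound. The only step that requires any care is producing the twirling family with $U_0=\id_B$; once the Heisenberg--Weyl construction is in hand, the rest is pure bookkeeping.
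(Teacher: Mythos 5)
Your argument is correct and complete. The paper states Fact~\ref{fact:bipartitedmaxquantum} without proof, so there is no in-text argument to compare against; your write-up fills that gap. Both halves check out: the Heisenberg--Weyl twirl identity $\sum_k U_k X U_k^{\dagger}=|B|\Tr(X)\,\id_B$ applied on the $B$ factor gives $\sum_k(\id_A\otimes U_k)\Theta_{AB}(\id_A\otimes U_k^{\dagger})=|B|\,\Theta_A\otimes\id_B$ with every summand positive semidefinite and the identity among the $U_k$, so dropping the other terms yields $\Theta_{AB}\preceq|B|\,\Theta_A\otimes\id_B$; and the support argument ($\Tr$ of the positive operator $(\id_A\otimes\bra{\psi})\Theta_{AB}(\id_A\otimes\ket{\psi})$ equals $\bra{\psi}\Theta_B\ket{\psi}=0$, forcing the operator itself to vanish) correctly gives $(\id_A\otimes\Pi_B)\Theta_{AB}(\id_A\otimes\Pi_B)=\Theta_{AB}$, after which conjugating the inequality by $\id_A\otimes\Pi_B$ delivers the stated bound. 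One small economy you could note: the full $|B|^2$-element Heisenberg--Weyl family is more than needed. If you pick the $B$-basis so that its first $\mathrm{rank}(\Theta_B)$ vectors span $\mathrm{supp}(\Theta_B)$, then the $|B|$ diagonal phase unitaries $Z^k$ alone pinch $\Theta_{AB}$ to its block-diagonal part $\sum_j\bra{j}\Theta_{AB}\ket{j}\otimes\ketbra{j}$, which is directly dominated by $\Theta_A\otimes\Pi_B$ (the off-support diagonal blocks vanish by the same trace argument), giving both steps at once. This is cosmetic; your version is fine as written.
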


\begin{fact}
\label{fact:bipartitedmax}
Let $\Theta_{AB}$ be a classical quantum state with $B$ as the quantum part. Let $\Pi_B$ be the projector onto the support of $\Theta_B$. Then 
$$\Theta_{AB} \preceq \Theta_A\otimes \Pi_B.$$
\end{fact}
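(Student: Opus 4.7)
The plan is to exploit the classical structure of register $A$ to reduce the operator inequality to a fiberwise statement on $B$ alone. Write
\[
\Theta_{AB} \;=\; \sum_a p_a \ketbra{a}_A \otimes \Theta^a_B,
\]
where $\{\ket{a}\}$ is the classical basis on $A$, the $p_a$ are probabilities, and each $\Theta^a_B$ is a density operator. Then $\Theta_A = \sum_a p_a \ketbra{a}_A$, so $\Theta_A \otimes \Pi_B = \sum_a p_a \ketbra{a}_A \otimes \Pi_B$. Because the $\ketbra{a}_A$ are mutually orthogonal projectors, the desired inequality $\Theta_{AB} \preceq \Theta_A \otimes \Pi_B$ is equivalent to the blockwise condition $p_a \Theta^a_B \preceq p_a \Pi_B$ for every $a$.

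Next I would argue the blockwise condition. For indices with $p_a = 0$ it is trivial, so fix any $a$ with $p_a > 0$. Since $\Theta_B = \sum_{a'} p_{a'} \Theta^{a'}_B$ is a sum of positive semidefinite operators, each summand $p_a \Theta^a_B$ satisfies $\text{supp}(\Theta^a_B) \subseteq \text{supp}(\Theta_B)$. Consequently $\Pi_B \Theta^a_B \Pi_B = \Theta^a_B$, and combining this with the trivial bound $\Theta^a_B \preceq \id_B$ gives
\[
\Theta^a_B \;=\; \Pi_B\,\Theta^a_B\,\Pi_B \;\preceq\; \Pi_B\,\id_B\,\Pi_B \;=\; \Pi_B.
\]
Multiplying by $p_a$ and summing over $a$ (with the orthogonal $\ketbra{a}_A$ factors) yields the claim.

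There is no real obstacle: the classical-quantum form forces $\Theta_{AB}$ to be block-diagonal in the $A$-basis, so the problem collapses to a one-line argument about supports. The statement can be viewed as a dimension-free improvement of Fact \ref{fact:bipartitedmaxquantum}; the factor $|B|$ appearing there is saved precisely because the $A$-register here is classical, which prevents off-diagonal coupling between the blocks and therefore removes the need to dominate $\id_B$ by $|B|\Pi_B$ at the global level.
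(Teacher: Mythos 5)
Your proof is correct. The paper states Fact \ref{fact:bipartitedmax} without proof, and your argument --- decompose $\Theta_{AB}=\sum_a p_a\ketbra{a}_A\otimes\Theta^a_B$, reduce to blocks by orthogonality, and use $\Theta^a_B=\Pi_B\Theta^a_B\Pi_B\preceq\Pi_B\,\id_B\,\Pi_B=\Pi_B$ via $\text{supp}(\Theta^a_B)\subseteq\text{supp}(\Theta_B)$ and $\Theta^a_B\preceq\id_B$ --- is exactly the standard one-line justification the authors presumably had in mind. One minor observation: in the places where the paper actually invokes this fact (Theorem \ref{theo:converse} and Corollary \ref{cor:converse}), the register playing the role of $B$ (namely $J$) is the \emph{classical} part, so the version being used is $\sum_b p_b\,\Theta^b_A\otimes\ketbra{b}_B\preceq\Theta_A\otimes\Pi_B$, which follows even more directly from $p_b\Theta^b_A\preceq\sum_{b'}p_{b'}\Theta^{b'}_A=\Theta_A$; your proof covers the statement as literally written ($A$ classical, $B$ quantum), and both readings are true, so nothing is lost.
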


\begin{fact}[\cite{TomHay13, li2014}]
\label{dmaxequi}
Let $\eps\in (0,1)$ and $n$ be an integer. Let $\rho^{\otimes n}, \sigma^{\otimes n}$ be quantum states. Define $\Phi(x) = \int_{-\infty}^x \frac{e^{-t^2/2}}{\sqrt{2\pi}} dt$. It holds that
\begin{equation*}
\dmaxeps{\rho^{\otimes n}}{\sigma^{\otimes n}}{\eps} = n\relent{\rho}{\sigma} + \sqrt{n\varrelent{\rho}{\sigma}} \Phi^{-1}(\eps) + O(\log n) ,
\end{equation*}
\end{fact}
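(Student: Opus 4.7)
The plan is to reduce the i.i.d.\ smooth max-relative entropy problem to a classical central-limit-type computation, paying a $O(\log n)$ price for the reduction. Throughout, set $D := \relent{\rho}{\sigma}$ and $V := \varrelent{\rho}{\sigma}$.

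The first step is to remove the non-commutativity between $\rho^{\otimes n}$ and $\sigma^{\otimes n}$ using Hayashi's pinching map. Let $\sigma^{\otimes n} = \sum_{\mu} \mu \Pi_{\mu}$ be its spectral decomposition, and let $\hat{\rho}^{(n)} := \sum_{\mu} \Pi_\mu \rho^{\otimes n} \Pi_\mu$ be the pinched state. Since $\sigma^{\otimes n}$ has only $\mathrm{poly}(n)$ distinct eigenvalues (by its permutation symmetry and type counting), the pinching inequality gives $\rho^{\otimes n} \preceq \mathrm{poly}(n)\cdot \hat{\rho}^{(n)}$. This yields $\dmaxeps{\rho^{\otimes n}}{\sigma^{\otimes n}}{\eps} = \dmaxeps{\hat{\rho}^{(n)}}{\sigma^{\otimes n}}{\eps} + O(\log n)$, reducing everything to commuting operators (the lower bound on this identity uses that the pinching map is a quantum channel and $\dmaxeps{\cdot}{\cdot}{\eps}$ is contractive under channels).

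Next, since $\hat{\rho}^{(n)}$ and $\sigma^{\otimes n}$ commute, diagonalize them simultaneously and view them as a pair of classical distributions. Pass to the Nussbaum--Szkoła distributions $P,Q$ associated to $\rho,\sigma$; these satisfy $\relent{P}{Q}=D$ and $\varrelent{P}{Q}=V$, and one can check that $\dmaxeps{\hat{\rho}^{(n)}}{\sigma^{\otimes n}}{\eps}$ is sandwiched between the classical smooth max-divergences $\dmaxeps{P^{n}}{Q^{n}}{\eps\pm O(1/\sqrt n)}$. This turns the problem into a one-dimensional large-deviation problem for the log-likelihood ratio $L_n(x^n) := \sum_{i=1}^n \log (P(x_i)/Q(x_i))$, an i.i.d.\ sum with mean $D$ and variance $V$ per term.

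For achievability, fix $\lambda = nD + \sqrt{nV}\,\Phi^{-1}(\eps) + c\log n$. Let $\Pi$ be the projector onto eigenvectors of the (commuting) ratio operator with eigenvalue $\leq 2^{\lambda}$; by the Berry--Esseen theorem applied to $L_n$, $\Tr(\hat{\rho}^{(n)}\Pi) \geq 1-\eps^2 + O(1/\sqrt n)$ for a suitable $c$. Define $\rho' := \Pi\,\hat{\rho}^{(n)}\Pi / \Tr(\hat{\rho}^{(n)}\Pi)$. Gentle measurement makes $\rho'$ within purified distance $\eps$ of $\hat{\rho}^{(n)}$, while $\rho' \preceq 2^{\lambda}(1-\eps^2)^{-1}\sigma^{\otimes n}$, giving the desired upper bound. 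For the converse, any $\rho'\in \ball{\eps}{\rho^{\otimes n}}$ with $\rho' \preceq 2^{\lambda}\sigma^{\otimes n}$ must, via the Fuchs--van de Graaf-style relation $\Tr(\rho' \cdot \mathbb{1}_{L_n > \lambda}) = 0$ and a continuity argument, force $\Pr_{P^n}[L_n \leq \lambda]\geq \eps - O(1/\sqrt n)$, which by Berry--Esseen in the other direction gives $\lambda \geq nD + \sqrt{nV}\,\Phi^{-1}(\eps) - O(\log n)$.

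The main obstacle is the first step: carefully controlling the $O(\log n)$ error from pinching, including on the \emph{smooth} side (where one must show that an optimal smoothing of $\hat{\rho}^{(n)}$ can be lifted back to a valid smoothing of $\rho^{\otimes n}$ without losing a factor that grows faster than $\log n$). Once the classical reduction is in place, the central-limit estimates are routine, but the quantitative Berry--Esseen correction ($O(1/\sqrt n)$ in probability) must be converted into the claimed $O(\log n)$ correction in the divergence, which is where the factor of $\sqrt{nV}$ in the expansion absorbs an additive $O(1)$ in $\Phi^{-1}(\cdot)$ into an additive $O(\log n)$ in $\lambda$.
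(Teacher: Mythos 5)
First, a point of reference: the paper does not prove Fact~\ref{dmaxequi} at all --- it is imported verbatim from \cite{TomHay13, li2014} --- so your sketch must be judged against the structure of those proofs rather than against anything internal to this paper. Your high-level plan (reduce to a commuting or classical pair, then run a Berry--Esseen analysis on the log-likelihood ratio, absorbing all reduction losses into the $O(\log n)$ term) is indeed the strategy of the cited works. The problem is that the step you pass over most quickly is the crux of the theorem. After pinching, the classical pair you obtain is $(\mathrm{diag}(\hat\rho^{(n)}),\mathrm{diag}(\sigma^{\otimes n}))$ in the joint eigenbasis; this is \emph{not} $(P^{n},Q^{n})$ for the Nussbaum--Szko\l{}a distributions of $(\rho,\sigma)$, and the asserted sandwich of $\dmaxeps{\hat\rho^{(n)}}{\sigma^{\otimes n}}{\eps}$ between classical smooth max-divergences of $P^{n},Q^{n}$ is not a standard fact (Nussbaum--Szko\l{}a controls hypothesis-testing error probabilities, not max-divergences). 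If you instead work with the pinched pair directly --- which is the honest route --- you must prove that its log-likelihood ratio has mean $n\relent{\rho}{\sigma}+O(\log n)$ \emph{and} variance $n\varrelent{\rho}{\sigma}+o(n)$; the variance statement is a genuine technical lemma in \cite{TomHay13}, not a routine observation, and without it the $\sqrt{n\varrelent{\rho}{\sigma}}$ coefficient in the expansion is unjustified.

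There are also two concrete errors in the endgame. In achievability, your choice $\lambda = n\relent{\rho}{\sigma}+\sqrt{n\varrelent{\rho}{\sigma}}\,\Phi^{-1}(\eps)+c\log n$ with $\eps<1/2$ has $\Phi^{-1}(\eps)<0$, so Berry--Esseen gives $\Tr(\hat\rho^{(n)}\Pi)\approx\eps$, not $\geq 1-\eps^{2}$ as you claim; to retain $1-\eps^{2}$ of the mass below the cutoff you need $\lambda\approx n\relent{\rho}{\sigma}-\sqrt{n\varrelent{\rho}{\sigma}}\,\Phi^{-1}(\eps^{2})$, which is in fact the form proved in \cite{TomHay13}. (The sign and argument of $\Phi^{-1}$ printed in Fact~\ref{dmaxequi}, together with the $\sup$ in the paper's definition of $\dmaxeps{\cdot}{\cdot}{\eps}$, appear to be typos; the paper only ever uses the bound after passing to $|\Phi^{-1}|$ via Fact~\ref{gaussianupper}, so nothing downstream breaks, but a proof should target the correct sign.) In the converse, $\rho'\preceq 2^{\lambda}\sigma^{\otimes n}$ does \emph{not} give $\Tr(\rho'\,Q)=0$ for $Q$ the spectral projector of $\{L_{n}>\lambda\}$; the correct step is $\Tr(\rho'\,Q_{\gamma})\leq 2^{\lambda}\Tr(\sigma^{\otimes n}Q_{\gamma})\leq 2^{-\gamma}$ for $Q_{\gamma}$ the projector of $\{L_{n}>\lambda+\gamma\}$, followed by the trace-distance continuity you invoke. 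Finally, the ``main obstacle'' you flag (lifting the smoothing of $\hat\rho^{(n)}$ back to $\rho^{\otimes n}$) has a standard resolution you should make explicit: the cutoff projector commutes with the pinching, so $\Tr(\rho^{\otimes n}\Pi)=\Tr(\hat\rho^{(n)}\Pi)$ and $\Pi\rho^{\otimes n}\Pi\preceq \mathrm{poly}(n)\cdot\Pi\hat\rho^{(n)}\Pi$, which costs only the admissible $O(\log n)$.
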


\begin{fact}
\label{gaussianupper}
For the function $\Phi(x) = \int_{-\infty}^x \frac{e^{-t^2/2}}{\sqrt{2\pi}} dt$ and $\eps\leq \frac{1}{2}$, it holds that $|\Phi^{-1}(\eps)| \leq 2\sqrt{\log\frac{1}{2\eps}}$.
\end{fact}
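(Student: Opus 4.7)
The plan is to reduce the claim to a standard Gaussian tail estimate. Since $\eps \leq 1/2$, the quantity $x := -\Phi^{-1}(\eps)$ is nonnegative, and by the reflection identity $\Phi(-x) = 1 - \Phi(x)$ one has
$$\eps = \int_x^\infty \frac{e^{-t^2/2}}{\sqrt{2\pi}}\,dt.$$
So the goal becomes upper-bounding $x = |\Phi^{-1}(\eps)|$ in terms of this tail integral.

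The main step is the well-known estimate $\int_x^\infty e^{-t^2/2}\,dt \leq \sqrt{\pi/2}\,e^{-x^2/2}$. I would derive it by the substitution $t = s + x$ followed by the trivial inequality $e^{-sx} \leq 1$ for $s, x \geq 0$, namely
$$\int_x^\infty e^{-t^2/2}\,dt = e^{-x^2/2}\int_0^\infty e^{-sx - s^2/2}\,ds \leq e^{-x^2/2}\int_0^\infty e^{-s^2/2}\,ds = \sqrt{\pi/2}\,e^{-x^2/2}.$$
Dividing by $\sqrt{2\pi}$ then yields $\eps \leq \tfrac{1}{2}\,e^{-x^2/2}$, i.e.\ $e^{-x^2/2} \geq 2\eps$.

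Taking logarithms of both sides gives $x^2 \leq 2\log(1/(2\eps))$, hence
$$|\Phi^{-1}(\eps)| \;=\; x \;\leq\; \sqrt{2\log(1/(2\eps))} \;\leq\; 2\sqrt{\log(1/(2\eps))},$$
which is the claimed bound (and is in fact looser than what the argument produces; one could keep the constant $\sqrt{2}$). There is no real obstacle here: the statement is an immediate consequence of an elementary Gaussian tail bound. The only thing to flag is the boundary case $\eps = 1/2$, where $x = 0$ and $\log(1/(2\eps)) = 0$, so the inequality holds with equality; and we implicitly assume $\eps > 0$ so that the logarithm on the right-hand side is finite.
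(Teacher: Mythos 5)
Your proof is correct and is essentially identical to the paper's: both bound the Gaussian tail via the shift $t \mapsto t + x$ and the inequality $e^{-xt}\le 1$ (equivalently $(x+t)^2 \ge x^2+t^2$), obtaining $\eps \le \tfrac12 e^{-x^2/2}$ and then solving for $x$.
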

\begin{proof}
We have $$\Phi(-x)=\int_{-\infty}^{-x} \frac{e^{-t^2/2}}{\sqrt{2\pi}} dt = \int_{0}^{\infty} \frac{e^{-(-x-t)^2/2}}{\sqrt{2\pi}} dt \leq e^{-x^2/2} \int_{0}^{\infty} \frac{e^{-(-t)^2/2}}{\sqrt{2\pi}} dt = \frac{1}{2}e^{-x^2/2}.$$ Thus, $\Phi^{-1}(\eps) \geq -2\sqrt{\log\frac{1}{2\eps}}$, which completes the proof.
\end{proof}

\begin{fact}
\label{slowchange}
Let $\rho_1$ be a quantum state and $\{\cE_2,\cE_3, \ldots\}$ be a collection of quantum maps. Define a series of quantum states $\{\rho_2,\rho_3,\ldots \}$ recursively as $\rho_i = \cE_i(\rho_{i-1})$. It holds that $$\Pur(\rho_i,\rho_1) \leq (i-1)\max_i\cdot\{\Pur(\cE_i(\rho_1),\rho_1)\}.$$
\end{fact}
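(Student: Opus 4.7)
The plan is to combine the triangle inequality for the purified distance (Fact \ref{fact:trianglepurified}) with the contractivity of the purified distance under quantum channels, which is a standard consequence of the monotonicity of fidelity under CPTP maps, namely $\F(\cE(\rho),\cE(\sigma)) \geq \F(\rho,\sigma)$, hence $\Pur(\cE(\rho),\cE(\sigma)) \leq \Pur(\rho,\sigma)$. With these two ingredients in hand, I will induct on $i$.

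For the inductive step, I would write $\rho_i = \cE_i(\rho_{i-1})$, insert the reference state $\cE_i(\rho_1)$, and apply the triangle inequality:
\[
\Pur(\rho_i,\rho_1) = \Pur(\cE_i(\rho_{i-1}),\rho_1) \leq \Pur(\cE_i(\rho_{i-1}),\cE_i(\rho_1)) + \Pur(\cE_i(\rho_1),\rho_1).
\]
The first term on the right is bounded by $\Pur(\rho_{i-1},\rho_1)$ using contractivity of the purified distance under $\cE_i$. Feeding in the inductive hypothesis $\Pur(\rho_{i-1},\rho_1) \leq \sum_{j=2}^{i-1}\Pur(\cE_j(\rho_1),\rho_1)$ (with the base case $\Pur(\rho_1,\rho_1)=0$ for $i=1$), we obtain
\[
\Pur(\rho_i,\rho_1) \leq \sum_{j=2}^{i}\Pur(\cE_j(\rho_1),\rho_1) \leq (i-1)\max_{2\leq j\leq i}\Pur(\cE_j(\rho_1),\rho_1),
\]
which is the claimed bound.

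I do not expect a genuine obstacle here: the only point requiring care is that we compare $\cE_j(\rho_{j-1})$ to $\rho_{j-1}$ indirectly, by routing through $\cE_j(\rho_1)$ and exploiting contractivity under $\cE_j$ to transfer back to $\Pur(\rho_{j-1},\rho_1)$. This is precisely what produces the telescoping sum of single-step discrepancies $\Pur(\cE_j(\rho_1),\rho_1)$ rather than the less useful $\Pur(\cE_j(\rho_{j-1}),\rho_{j-1})$ that a naive triangle-inequality telescope would yield.
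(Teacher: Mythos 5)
Your proposal is correct and is essentially identical to the paper's own proof: the same insertion of $\cE_i(\rho_1)$ via the triangle inequality, the same use of contractivity of the purified distance under CPTP maps, and the same telescoping recursion (which the paper leaves implicit after stating the single step). No further comment is needed.
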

\begin{proof}
Consider 
$$\Pur(\rho_i, \rho_1) = \Pur(\cE_i(\rho_{i-1}),\rho_1) \leq \Pur(\cE_i(\rho_{i-1}),\cE_i(\rho_1)) + \Pur(\cE_i(\rho_1),\rho_1) \leq \Pur(\rho_{i-1},\rho_1) + \Pur(\cE_i(\rho_1),\rho_1).$$ This completes the proof.
\end{proof}

\begin{lemma}[Convex-split lemma \cite{AnshuDJ14}] Let $\eps>0$, $\rho_M$ and $\sigma_M$ be quantum states, $n$ be an integer and $k = \dmaxeps{\rho_M}{\sigma_M}{\eps}$. Consider the following quantum state
\begin{equation*} \label{eq:convexsplit}
 \tau_{M_1M_2\ldots M_n} =  \frac{1}{n}\sum_{j=1}^n \rho_{M_j}\otimes \sigma_{M_1}\otimes \sigma_{M_2}\ldots\otimes\sigma_{M_{j-1}}\otimes\sigma_{M_{j+1}}\ldots\otimes\sigma_{M_n}
\end{equation*}
where $\forall j \in [n]: \rho_{M_j} = \rho_{M}$ and $\sigma_{M_j}=\sigma_M$.  Then,  
$$ \Pur(\tau_{M_1M_2\ldots M_n}, \sigma_{M_1}\otimes\sigma_{M_2}\ldots \otimes \sigma_{M_n}) \leq \eps+ \sqrt{\frac{2^k}{n}}.$$ 
\end{lemma}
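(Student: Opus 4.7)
The plan is to decompose the right-hand side into the two contributions $\eps$ and $\sqrt{2^k/n}$, and glue them with the triangle inequality for purified distance (Fact \ref{fact:trianglepurified}). Using the definition of smooth max-relative entropy, extract $\rho'_M \in \ball{\eps}{\rho_M}$ satisfying $\rho'_M \preceq 2^k \sigma_M$, and introduce the ``smoothed'' convex-split state
\[
\tau'_{M_1 \cdots M_n} \;:=\; \frac{1}{n}\sum_{j=1}^n \rho'_{M_j} \otimes \bigotimes_{i \neq j}\sigma_{M_i}.
\]
The map $\cE : \xi \mapsto \frac{1}{n}\sum_j \xi_{M_j} \otimes \bigotimes_{i \neq j}\sigma_{M_i}$ is a convex combination of isometric embeddings tensored with fixed ancillas, hence CPTP, and satisfies $\cE(\rho_M) = \tau$ and $\cE(\rho'_M) = \tau'$. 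The data-processing inequality for purified distance then gives $\Pur(\tau, \tau') \leq \Pur(\rho_M, \rho'_M) \leq \eps$, handling the smoothing contribution.

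The core estimate is $\Pur(\tau', \sigma^{\otimes n}) \leq \sqrt{2^k/n}$, which I would establish through the Petz $\chi^2$-divergence $\chi^2(\omega \| \mu) := \Tr[(\omega-\mu)\mu^{-1}(\omega-\mu)] = \Tr[\omega^2 \mu^{-1}] - 1$. Setting $\Delta := \rho'_M - \sigma_M$, the identity
\[
\tau' - \sigma^{\otimes n} \;=\; \frac{1}{n}\sum_{j=1}^n \Delta_{M_j} \otimes \bigotimes_{i \neq j}\sigma_{M_i}
\]
expands $\chi^2(\tau' \| \sigma^{\otimes n})$ into $n^2$ terms; the $n(n-1)$ off-diagonal ones ($j \neq k$) each reduce to $\Tr[\Delta]^2 = 0$ because $\Tr[\rho'] = \Tr[\sigma] = 1$, and each of the $n$ diagonal terms equals $\Tr[\Delta^2 \sigma^{-1}]$. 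This gives $\chi^2(\tau' \| \sigma^{\otimes n}) = (\Tr[\rho'^2 \sigma^{-1}] - 1)/n$. Next, $\rho' \preceq 2^k \sigma$ implies $\sigma^{-1/2}\rho'\sigma^{-1/2} \preceq 2^k I$, hence $\Tr[\rho'^2 \sigma^{-1}] = \Tr[\sigma^{1/2}(\sigma^{-1/2}\rho'\sigma^{-1/2})^2\sigma^{1/2}] \leq 2^k \Tr[\rho'] = 2^k$. Thus $\chi^2(\tau' \| \sigma^{\otimes n}) \leq 2^k/n$. The general chain $\Pur(\omega,\mu)^2 \leq -2\ln F(\omega,\mu) \leq D(\omega \| \mu) \leq \chi^2(\omega \| \mu)$ (via $1-x \leq -\ln x$, the R\'enyi monotonicity $\tilde D_{1/2} \leq \tilde D_1 = D$, and $D \leq D_2^{\text{Petz}} = \ln(1+\chi^2) \leq \chi^2$ in the natural logarithm) yields $\Pur(\tau', \sigma^{\otimes n}) \leq \sqrt{2^k/n}$.

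Combining via Fact \ref{fact:trianglepurified} gives $\Pur(\tau, \sigma^{\otimes n}) \leq \eps + \sqrt{2^k/n}$, as claimed. The main obstacle is engineering the $1/n$ factor: a naive triangle-inequality bound on $\|\tau' - \sigma^{\otimes n}\|_1$, summing the $n$ individual perturbations, only yields an $O(1)$ estimate and loses the entire content of the lemma. The $1/n$ scaling emerges only thanks to the cancellation of the $n(n-1)$ off-diagonal contributions in the $\chi^2$ expansion, which is a direct consequence of the trace-preservation identity $\Tr[\Delta] = 0$; the remaining $n$ diagonal contributions out of $n^2$ produce the factor $1/n$ that makes convex-split useful.
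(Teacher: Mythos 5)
Your proof is correct, but it is worth noting that the paper itself does not prove this lemma --- it imports it from \cite{AnshuDJ14} --- and your argument differs from that original proof in its core estimate. Both proofs share the outer structure (extract $\rho'_M\in\ball{\eps}{\rho_M}$ with $\rho'_M\preceq 2^k\sigma_M$, bound the smoothed state, glue with the triangle inequality for $\Pur$ and data processing). The original argument bounds $\relent{\tau'}{\sigma^{\otimes n}}$ directly, via the decomposition $\relent{\tau'}{\sigma^{\otimes n}}=\frac{1}{n}\sum_j\bigl[\relent{\tau'^{(j)}}{\sigma^{\otimes n}}-\relent{\tau'^{(j)}}{\tau'}\bigr]$ for the components $\tau'^{(j)}=\rho'_{M_j}\otimes\bigotimes_{i\neq j}\sigma_{M_i}$, monotonicity of relative entropy under the partial trace onto $M_j$, and operator monotonicity of the logarithm applied to $\tau'_{M_j}=\frac{1}{n}\rho'+\frac{n-1}{n}\sigma\preceq(1+\frac{2^k}{n})\sigma$, arriving at $\relent{\tau'}{\sigma^{\otimes n}}\leq\log(1+\frac{2^k}{n})$ before converting to purified distance. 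Your route replaces this with a second-moment computation: the exact cancellation of the $n(n-1)$ cross terms in $\chi^2(\tau'\|\sigma^{\otimes n})$ because $\Tr[\Delta]=0$, leaving $\frac{1}{n}(\Tr[\rho'^2\sigma^{-1}]-1)\leq\frac{2^k}{n}$. This is more elementary and makes the origin of the $1/n$ completely transparent (as you correctly emphasize, it is the vanishing of the off-diagonal terms, not any triangle-inequality accounting); the price is that you must invoke the chain $\Pur^2\leq-2\ln\F\leq\relent{\cdot}{\cdot}\leq\ln(1+\chi^2)$, i.e.\ monotonicity of both the sandwiched and Petz R\'enyi families in $\alpha$, none of which is stated in the paper, whereas the original proof needs only $\F^2\geq 2^{-\relent{\cdot}{\cdot}}$. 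Both yield the same final bound.

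Two small points to tighten. First, $\sigma^{-1}$ must be read as the pseudo-inverse on $\mathrm{supp}(\sigma)$; this is harmless because $\rho'\preceq 2^k\sigma$ forces $\mathrm{supp}(\rho')\subseteq\mathrm{supp}(\sigma)$, so the cross terms are $\Tr[\Delta\,\Pi_\sigma]=1-1=0$ as needed, but you should say so. Second, the paper's displayed definition of $\dmaxeps{\rho}{\sigma}{\eps}$ uses a $\sup$ over the ball (surely a typo for the standard $\inf$); your extraction of $\rho'$ presumes the $\inf$ convention, which is the one under which the lemma is normally stated --- under the literal $\sup$ definition one could simply take $\rho'=\rho$ and the argument only gets easier.
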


\section{Proof of Theorem \ref{thm_oneRER}} \label{sec_oneshot}

\subsection{Achievability}

\begin{theorem}[Achievability]
\label{theo:catalyst}
Fix $\eps,\delta>0$ and $\rho_{M}$. Suppose Assumption \ref{assumeperm} holds. There exists an $(\eps+\delta, k + 2\log\frac{1}{\delta})$-catalytic transformation of $\rho_{M}$ to $\cF$, where $k := \min_{\sigma_{M}\in \cF}\dmaxeps{\rho_{M}}{\sigma_{M}}{\eps}$. 
\end{theorem}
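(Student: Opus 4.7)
The plan is to invoke the Convex-Split Lemma with a free catalyst consisting of $n-1$ copies of an approximately-optimal $\sigma_M\in\cF$ and a uniformly random classical index $J$, then use $J$ to control a swap that redistributes $\rho_M$ uniformly across the $n$ slots, producing exactly the convex-split state on $ME$.

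More concretely, let $\sigma_M\in\cF$ attain $\dmaxeps{\rho_M}{\sigma_M}{\eps}=k$, and pick $n=\lceil 2^{k}/\delta^{2}\rceil$ so that $\sqrt{2^{k}/n}\leq\delta$ and $\log n\leq k+2\log(1/\delta)+O(1)$. Fix the experimenter's decomposition $\cH=\cH_{1}\otimes\cdots\otimes\cH_{r}$ with $r=n+1$, where $\cH_{1},\ldots,\cH_{n}$ are all isomorphic to $\cH_{M}$ and $\cH_{r}=\cH_{J}$ has dimension $n$. Set $M_{1}:=M$, let $E$ be the composite register $M_{2}\cdots M_{n}$, set the catalyst
\[
\mu_{EJ}\;:=\;\underbrace{\sigma_{M_{2}}\otimes\cdots\otimes\sigma_{M_{n}}}_{\mu_{E}}\;\otimes\;\frac{\id_{J}}{n},
\]
which lies in $\cF$ by applying property $1(b)$ of Definition \ref{def:generictransform} repeatedly to the singletons $\sigma_{M_i}\in\cF_{\{i\}}$ and then taking $\sigma=\mu_{E}$, all $U_{j}=\id$ in the convex representation of $\cF$ from Assumption \ref{assumeperm} (which places the uniform state on the last register). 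Then define the free unitary
\[
U\;:=\;\sum_{j=1}^{n}W_{j}\otimes\ketbra{j}_{J},
\]
where $W_{j}$ is the swap of $M_{1}$ with $M_{j}$ (and $W_{1}=\id$); this is of the form permitted in Assumption \ref{assumeperm}, and each $W_{j}\in\cU_{\{1,\ldots,r-1\}}$ because the permutation clause of Assumption \ref{assumeperm} (applicable since all $\cF_{\{i\}}$ for $i<r$ coincide with $\cF_{\{1\}}$) guarantees that $\cF_{\{1,\ldots,r-1\}}$ is closed under swapping identical-dimension registers.

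A direct calculation (using orthogonality of the $\ketbra{j}_{J}$ to collapse cross terms) shows
\[
\Theta_{MEJ}\;=\;U(\rho_{M}\otimes\mu_{EJ})U^{\dagger}\;=\;\frac{1}{n}\sum_{j=1}^{n}\Big(\rho_{M_{j}}\otimes\bigotimes_{i\neq j}\sigma_{M_{i}}\Big)\otimes\ketbra{j}_{J},
\]
so $\Theta_{ME}$ is exactly the convex-split state $\tau_{M_{1}\cdots M_{n}}$ of the Convex-Split Lemma. The Lemma therefore gives $\Pur(\Theta_{ME},\sigma_{M}^{\otimes n})\leq \eps+\sqrt{2^{k}/n}\leq\eps+\delta$, and the target $\sigma_{M}^{\otimes n}=\sigma_{M}\otimes\mu_{E}$ has the product form required by Definition \ref{actualcatalytic}, certifying an $(\eps+\delta,k+2\log(1/\delta))$-catalytic transformation.

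The principal obstacle is bookkeeping at the interface with Assumption \ref{assumeperm}: one must check simultaneously that $\mu_{EJ}\in\cF$ (which relies on the convex representation placing $\id_{J}/n$ on the last register), that $U$ is a legitimate member of $\cU$ (which needs $U$ to be classically controlled by $J$ \emph{and} each $W_{j}$ to be a swap that is free on the first $r-1$ registers), and that the marginal on $ME$ computes to precisely the convex-split state rather than something merely close to it. Everything else is a one-line application of the Convex-Split Lemma together with the choice of $n$.
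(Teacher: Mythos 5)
Your proof is correct and follows essentially the same route as the paper: a catalyst of i.i.d.\ copies of the optimal $\sigma_M$ together with a uniform classical index $J$ controlling a swap, so that the convex-split lemma with $n\approx 2^k/\delta^2$ gives the $\eps+\delta$ error and $\log n \le k+2\log(1/\delta)$ randomness cost. The only (harmless) difference is bookkeeping: you identify the input register with the first slot and use $n-1$ catalyst copies so that the $ME$ marginal is exactly the convex-split state, whereas the paper keeps $M$ as a separate $(n+1)$-th register with $n$ catalyst copies and obtains $\sigma_M\otimes\tau_{M_1\cdots M_n}$; both yield the identical bound.
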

\begin{proof}
Let $n:= \frac{2^{k}}{\delta^2}$ and let $\sigma_{M}\in \cF$ be the quantum state achieving the minimum in the definition of $k$. Let $J$ be a random variable taking values uniformly in $\{1,2,\ldots n\}$. Introduce registers $M_1,M_2,\ldots M_n \equiv M$. Let $\mu_{M_1M_2\ldots M_n} : = \sigma_{M}^{\otimes n}$ be the quantum state in $\cF$ which an experimenter uses as a catalyst.  

The protocol is as follows. Experimenter introduces the quantum state $\mu_{M_1M_2\ldots M_n}\otimes \frac{\id_J}{n}$, which belongs to $\cF$ due to Definition \ref{def:generictransform} (Item $1(b)$). Controlled on the value $j$ in $J$, the experimenter swaps register $M$ with $M_j$. This operation belongs to $\cU$ from Assumption \ref{assumeperm}. Let the resulting global quantum state be $\tau_{MM_1M_2\ldots M_n J}$. We have  
$\tau_{MM_1M_2\ldots M_n} = \sigma_{M}\otimes \tau_{M_1M_2M_2\ldots M_n}$ and
$$\tau_{M_1M_2\ldots M_n} =  \frac{1}{n}\sum_{j=1}^n \rho_{M_j}\otimes \sigma_{M_1}\otimes \sigma_{M_2}\ldots\otimes\sigma_{M_{j-1}}\otimes\sigma_{M_{j+1}}\ldots\otimes\sigma_{M_n}.$$
Using convex-split lemma (Lemma \ref{eq:convexsplit}), this allows us to conclude that
$$\Pur(\tau_{M_1M_2\ldots M_n}, \sigma_{M}^{\otimes n}) \leq \eps + \delta.$$  This completes the proof.
\end{proof}   

\vspace{0.1in} 

\noindent{\bf Remark 1:} The number of qubits used as a catalyst in above task is $\log|M| \cdot \frac{2^k}{\delta^2}$.

\vspace{0.1in}

\subsection{Converse Proof}

In this section, we prove a converse bound for resource transformation, showing near optimality of Theorem \ref{theo:catalyst}. We will separately consider the cases where Assumption \ref{assumeperm} holds and where it does not. 

\begin{theorem}
\label{theo:converse}
Fix an $\eps>0$. For every $(\eps, \log|J|)$- transformation of $\rho_M$ to $\cF$, it holds that 
$$\log |J| \geq \min_{\sigma_M\in \cF}\dmaxeps{\rho_M}{\sigma_M}{\eps},$$ if Assumption \ref{assumeperm} is true.
\end{theorem}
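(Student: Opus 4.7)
The plan is to extract from the hypothesized transformation a free state $\sigma'_M\in \cF$ together with a state $\tilde\rho_M$ within purified distance $\eps$ of $\rho_M$ satisfying the operator inequality $\tilde\rho_M \preceq |J|\,\sigma'_M$, which will immediately give $\log|J|\geq \dmax{\tilde\rho_M}{\sigma'_M}\geq \min_{\sigma_M\in \cF}\dmaxeps{\rho_M}{\sigma_M}{\eps}$. The leverage I will use from Assumption~\ref{assumeperm} is that $J$ stays classical throughout the protocol: every $\mu_{EJ}\in \cF$ has the form $\sum_j q_j\nu^j_E\otimes \ketbra{j}_J$ with $\mu_J=\id_J/|J|$, and every allowed $U$ has the controlled form $\sum_j U_j\otimes \ketbra{j}_J$, so $\Theta_{MEJ}=U(\rho_M\otimes\mu_{EJ})U^\dagger$ is classical on $J$ with $\Theta_J=\id_J/|J|$.

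The central step will be to lift $\sigma_{ME}$ to a classical-on-$J$ extension $\sigma_{MEJ}$. I apply Fact~\ref{closeextension} to $\sigma_{ME}$ using the classical-on-$J$ extension $\Theta_{MEJ}$ of $\Theta_{ME}$; this produces $\sigma_{MEJ}=\sum_j q_j\sigma^j_{ME}\otimes\ketbra{j}_J$ with $\text{supp}(\sigma_J)\subseteq \text{supp}(\Theta_J)$ and $\Pur(\Theta_{MEJ},\sigma_{MEJ})=\Pur(\Theta_{ME},\sigma_{ME})\leq \eps$. Since $q_j\sigma^j_{ME}\preceq \sigma_{ME}$ and the $\ketbra{j}_J$ are mutually orthogonal (the reasoning underlying Fact~\ref{fact:bipartitedmax}), I obtain
$$\sigma_{MEJ}\preceq \sigma_{ME}\otimes \Pi_J\preceq \sigma_{ME}\otimes \id_J = |J|\,\sigma_{ME}\otimes \mu_J.$$
Because $\sigma_{ME}\in \cF_{\{1,\ldots,r-1\}}$ and $\mu_J=\id_J/|J|\in \cF_{\{r\}}$ by Definition~\ref{def:generictransform}(1d), their product lies in $\cF$ by (1b); $U^\dagger\in \cU$ then keeps it in $\cF$, and (1c) gives $\sigma'_M:=\Tr_{EJ}(U^\dagger(\sigma_{ME}\otimes \mu_J)U)\in \cF$. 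Setting $\tilde\rho_M:=\Tr_{EJ}(U^\dagger\sigma_{MEJ}U)$, the order-preserving nature of conjugation and partial trace yields $\tilde\rho_M\preceq |J|\,\sigma'_M$. For proximity I will use $U^\dagger\Theta_{MEJ}U=\rho_M\otimes \mu_{EJ}$, unitary invariance of $\Pur$, and monotonicity under partial trace to get $\Pur(\rho_M,\tilde\rho_M)\leq \Pur(\Theta_{MEJ},\sigma_{MEJ})\leq \eps$, closing the argument.

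The main obstacle is exactly the extension step above: Assumption~\ref{assumeperm} is essential because the classicality of $J$ permits the bound $\sigma_{MEJ}\preceq \sigma_{ME}\otimes \Pi_J$ \emph{without} the $|J|$ prefactor that Fact~\ref{fact:bipartitedmaxquantum} would otherwise impose. Substituting the purely quantum Fact~\ref{fact:bipartitedmaxquantum} would yield only $\sigma_{MEJ}\preceq |J|\,\sigma_{ME}\otimes \Pi_J\preceq |J|^2\sigma_{ME}\otimes \mu_J$, which is precisely what forces the weakening to a $\tfrac{1}{2}$ factor in the unconditional converse of Theorem~\ref{thm_oneRER}.
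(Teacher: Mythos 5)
Your proposal is correct and follows essentially the same route as the paper's proof: extend $\sigma_{ME}$ to a classical-on-$J$ state via Fact~\ref{closeextension}, use the classicality of $J$ (Fact~\ref{fact:bipartitedmax}) to get the operator bound without the extra $|J|$ factor, pull everything back through $U^{\dagger}$ and trace out $EJ$, and verify membership of the resulting state in $\cF$ via Definition~\ref{def:generictransform}. Your closing observation about why Fact~\ref{fact:bipartitedmaxquantum} forces the factor $\tfrac{1}{2}$ in the unconditional converse also matches the paper's Theorem~\ref{converse_general}.
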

\begin{proof}
Consider any protocol that starts with quantum states $\rho_M \otimes \mu_{EJ}$ and applies the unitary $U := \sum_j U_j\otimes \ketbra{j}_J$ (where $U_j: \cH_E\otimes \cH_M\rightarrow \cH_E\otimes \cH_M$) to obtain $\Theta_{MEJ} = U(\rho_M\otimes \mu_{EJ}) U^{\dagger}$. By Assumption \ref{assumeperm}, $\mu_{EJ}$ is a classical-quantum state with $J$ being the classical register. There exist $\sigma_{ME} \in \cF$ such that $\Pur(\Theta_{ME}, \sigma_{ME}) \leq \eps$.

 From Fact \ref{closeextension}, there exists a classical-quantum extension $\Theta'_{MEJ}$ of $\sigma_{ME}$ (that is, $\Theta'_{ME} = \sigma_{ME}$ and $J$ is the classical register in $\Theta'_{MEJ}$) such that 
\begin{equation}
\label{eq:thetaextension}
\Pur(\Theta_{MEJ}, \Theta'_{MEJ}) = \Pur(\Theta_{ME},\sigma_{ME}) \leq \eps.
\end{equation}
Since $\Theta'_{MEJ}$ is a classical-quantum state, it holds by Fact \ref{fact:bipartitedmax} that 
\begin{equation*}
\Theta'_{MEJ} \preceq \Theta'_{ME}\otimes \id_J = |J|\cdot \sigma_{ME}\otimes \frac{\id_J}{|J|} \implies \dmax{\Theta'_{MEJ}}{\sigma_{ME}\otimes \frac{\id_J}{|J|}} \leq \log|J|.
\end{equation*}
 From this, we conclude that 
\begin{equation}
\label{eq:lowerboundJ}
\dmax{\Tr_{EJ}\left(U^{\dagger}\Theta'_{MEJ}U\right)}{\Tr_{EJ}\left(U^{\dagger}\left(\sigma_{ME}\otimes \frac{\id_J}{|J|}\right) U\right)} \leq \log|J|.
\end{equation}
From Equation (\ref{eq:thetaextension}),  
\begin{equation}
\label{eq:ballofrho}
\Pur\left(\Tr_{EJ}\left(U^{\dagger}\Theta'_{MEJ}U\right), \rho_M\right) \leq \Pur(U^{\dagger}\Theta'_{MEJ}U, \rho_M\otimes\mu_{EJ}) = \Pur(\Theta'_{MEJ}, \Theta_{MEJ}) \leq \eps.
\end{equation}
From Definition \ref{def:generictransform} (Items $1 (a), 3$), $$\Tr_{J}\left(U^{\dagger}\left(\sigma_{ME}\otimes \frac{\id_J}{|J|} \right)U\right) = \frac{1}{|J|}\sum_j U^{\dagger}_j(\sigma_{ME}) U_j \in \cF.$$ Thus, again from Definition \ref{def:generictransform} (Item $1 (c)$), 
\begin{equation}
\label{eq:stateinC2}
\Tr_{EJ}\left(U\left(\sigma_M\otimes \mu'_E\otimes \frac{\id_J}{|J|} \right)U^{\dagger}\right) \in \cF.
\end{equation}
Combining Equations (\ref{eq:lowerboundJ}), (\ref{eq:ballofrho}) and (\ref{eq:stateinC2}), we obtain 
$$ \min_{\sigma'_M \in \cF}\dmaxeps{\rho_M}{\sigma'_M}{\eps} \leq \log|J|.$$ This completes the proof.

\end{proof}

Now we consider the converse bound without Assumption \ref{assumeperm}.

\begin{theorem}
\label{converse_general}
Fix an $\eps>0$. For every $(\eps, \log|J|)$- transformation of $\rho_M$ to $\cF$, it holds that 
$$\log |J| \geq \frac{1}{2}\min_{\sigma_M\in \cF}\dmaxeps{\rho_M}{\sigma_M}{\eps}.$$
\end{theorem}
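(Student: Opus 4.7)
The plan is to mirror the proof of Theorem \ref{theo:converse}, but since Assumption \ref{assumeperm} is dropped we can no longer take $\mu_{EJ}$ to be classical-quantum on $J$. This forces us to use the purely quantum versions of the relevant lemmas (Facts \ref{closequantumextension} and \ref{fact:bipartitedmaxquantum}) in place of their classical-quantum counterparts (Facts \ref{closeextension} and \ref{fact:bipartitedmax}). The effect is precisely a multiplicative factor of two in the final bound, which is where the $\tfrac12$ in the statement comes from.

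Concretely, I would start with any $(\eps,\log|J|)$-transformation $\Theta_{MEJ} = U(\rho_M\otimes\mu_{EJ})U^\dagger$ and a state $\sigma_{ME}\in\cF$ with $\Pur(\Theta_{ME},\sigma_{ME})\leq \eps$. By Fact \ref{closequantumextension} I obtain an extension $\Theta'_{MEJ}$ of $\sigma_{ME}$ satisfying $\Pur(\Theta_{MEJ},\Theta'_{MEJ}) = \Pur(\Theta_{ME},\sigma_{ME})\leq \eps$. Next I apply Fact \ref{fact:bipartitedmaxquantum} to $\Theta'_{MEJ}$, which yields $\Theta'_{MEJ}\preceq |J|\cdot \sigma_{ME}\otimes \Pi_J$, where $\Pi_J$ is the projector onto the support of $\Theta'_J$. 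Since $\Pi_J\preceq \id_J$, this gives
\begin{equation*}
\Theta'_{MEJ}\preceq |J|^2 \cdot \sigma_{ME}\otimes \frac{\id_J}{|J|},
\qquad\text{hence}\qquad
\dmax{\Theta'_{MEJ}}{\sigma_{ME}\otimes \tfrac{\id_J}{|J|}}\leq 2\log|J|.
\end{equation*}
Now I pull this back with $U^\dagger$ (which preserves $\dmax$) and trace out $EJ$ (which does not increase $\dmax$). The resulting state on $M$ lies in $\cF$: by Item $1(d)$ of Definition \ref{def:generictransform} we have $\id_J/|J|\in \cF_{\{r\}}$, by Item $1(b)$ we have $\sigma_{ME}\otimes\id_J/|J|\in \cF$, and since $U^\dagger \in \cU$ and Item $1(c)$ closes $\cF$ under partial trace, $\Tr_{EJ}(U^\dagger(\sigma_{ME}\otimes \id_J/|J|)U)\in\cF$. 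Meanwhile, because $\Tr_{EJ}(U^\dagger \Theta_{MEJ} U)=\rho_M$, invariance of $\Pur$ under unitaries and monotonicity under partial trace give
\begin{equation*}
\Pur\!\left(\Tr_{EJ}(U^\dagger \Theta'_{MEJ} U),\,\rho_M\right)\leq \Pur(\Theta'_{MEJ},\Theta_{MEJ})\leq \eps.
\end{equation*}
Combining the three displayed inequalities with the definition of the smoothed max-relative entropy produces $\min_{\sigma'_M\in\cF}\dmaxeps{\rho_M}{\sigma'_M}{\eps}\leq 2\log|J|$, which is the claim.

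The only nontrivial step is the passage from a classical-quantum to a genuinely quantum auxiliary register. In the classical-quantum setting, Fact \ref{fact:bipartitedmax} gives $\Theta'_{MEJ}\preceq \sigma_{ME}\otimes \id_J$ with no dimensional prefactor, so the entire $|J|$ factor comes from normalising $\id_J$ to $\id_J/|J|$. In the general quantum setting Fact \ref{fact:bipartitedmaxquantum} already carries one factor of $|J|$, and the normalisation of $\id_J/|J|$ contributes a second, yielding $|J|^2$ and hence $2\log|J|$. I do not expect this loss to be tightenable by the present method without imposing additional structure on $\mu_{EJ}$, which is consistent with Assumption \ref{assumeperm} being precisely the hypothesis under which the stronger bound of Theorem \ref{theo:converse} is available.
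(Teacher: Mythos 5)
Your proposal is correct and follows essentially the same route as the paper's own proof: the same use of Fact \ref{closequantumextension} to extend $\sigma_{ME}$ to $J$, the same application of Fact \ref{fact:bipartitedmaxquantum} (with $\Pi_J\preceq\id_J$) to get the $|J|^2$ prefactor and hence $2\log|J|$, and the same pull-back by $U^\dagger$, partial trace, and closure arguments for membership in $\cF$. Your closing remark correctly identifies the source of the factor of two as the replacement of Fact \ref{fact:bipartitedmax} by its fully quantum counterpart.
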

\begin{proof}
Consider any protocol that starts with quantum states $\rho_M \otimes \mu_{EJ}$ and applies the unitary $U \in \cU$ to obtain $\Theta_{MEJ} = U(\rho_M\otimes \mu_{EJ}) U^{\dagger}$. There exist $\sigma_{ME} \in \cF$ such that $\Pur(\Theta_{ME}, \sigma_{ME}) \leq \eps$.

From Fact \ref{closequantumextension}, there exists an extension $\Theta'_{MEJ}$ of $\sigma_{ME}$ (that is, $\Theta'_{ME} = \sigma_{ME}$) such that 
\begin{equation}
\label{eq:thetaextensiongen}
\Pur(\Theta_{MEJ}, \Theta'_{MEJ}) = \Pur(\Theta_{ME},\sigma_{ME}) \leq \eps.
\end{equation}
It holds by Fact \ref{fact:bipartitedmaxquantum} that 
\begin{equation*}
\Theta'_{MEJ} \preceq |J|\Theta'_{ME}\otimes \id_{J} = |J|^2\cdot \sigma_{ME}\otimes \frac{\id_J}{|J|} \implies \dmax{\Theta'_{MEJ}}{\sigma_{ME}\otimes \frac{\id_{J}}{|J|}} \leq 2\log|J|.
\end{equation*}
 From this, we conclude that 
\begin{equation}
\label{eq:lowerboundJgen}
\dmax{\Tr_{EJ}\left(U^{\dagger}\Theta'_{MEJ}U\right)}{\Tr_{EJ}\left(U^{\dagger}\left(\sigma_{ME}\otimes \frac{\id_{J}}{|J|}\right) U\right)} \leq 2\log|J|.
\end{equation}
From Equation (\ref{eq:thetaextensiongen}),  
\begin{equation}
\label{eq:ballofrhogen}
\Pur\left(\Tr_{EJ}\left(U^{\dagger}\Theta'_{MEJ}U\right), \rho_M\right) \leq \Pur(U^{\dagger}\Theta'_{MEJ}U, \rho_M\otimes\mu_{EJ}) = \Pur(\Theta'_{MEJ}, \Theta_{MEJ}) \leq \eps.
\end{equation}
From Definition \ref{def:generictransform} (Items $1 (b), 1 (d), 3$), $$\Tr_{J}\left(U^{\dagger}\left(\sigma_{ME}\otimes \frac{\id_{J}}{|J|} \right)U\right) \in \cF.$$ Thus, again from Definition \ref{def:generictransform} (Item $1 (c)$), 
\begin{equation}
\label{eq:stateinC2gen}
\Tr_{EJ}\left(U\left(\sigma_M\otimes \mu'_E\otimes \frac{\id_{J}}{|J|} \right)U^{\dagger}\right) \in \cF.
\end{equation}
Combining Equations (\ref{eq:lowerboundJgen}), (\ref{eq:ballofrhogen}) and (\ref{eq:stateinC2gen}), we obtain 
$$ \min_{\sigma'_M \in \cF}\dmaxeps{\rho_M}{\sigma'_M}{\eps} \leq 2\log|J|.$$ This completes the proof.

\end{proof}

\section{Proof of Theorem \ref{thm_multiRER}.}\label{sec_multioneshot}

\subsection*{Achievability}

The proof of achievability follows along lines similar to proof of Theorem \ref{theo:catalyst}. We state it as a corollary.
\begin{cor}
\label{cor:multipartite}
Fix $\eps,\delta>0$ and $\rho_{M^1M^2\ldots M^t}$. Suppose Assumption \ref{assumepermmul} holds. Then there exists an $(\eps+\delta, k + 2\log\frac{1}{\delta}, t)$-catalytic transformation of $\rho_{M^1M^2\ldots M^t}$ to $\cF$, where $$k := \min_{\sigma_{M^1M^2\ldots M^t}\in \cF}\dmaxeps{\rho_{M^1M^2\ldots M^t}}{\sigma_{M^1M^2\ldots M^t}}{\eps}.$$ 
\end{cor}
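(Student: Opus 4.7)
The plan is to mimic the proof of Theorem \ref{theo:catalyst} almost verbatim, treating the joint register $M^1M^2\ldots M^t$ as a single ``multi-party block'' and using Assumption \ref{assumepermmul} in place of Assumption \ref{assumeperm}. First, let $\sigma_{M^1M^2\ldots M^t}\in \cF$ achieve the minimum defining $k$, and set $n := 2^k/\delta^2$. Introduce $n$ fresh copies of the shared-register block, call them $M^1_iM^2_i\ldots M^t_i$ for $i=1,\ldots ,n$, with party $a$ holding $M^a_i$. Take as the catalyst the state
$$\mu_{E J_1 J_2\ldots J_t} := \sigma_{M^1_1\ldots M^t_1}\otimes \cdots \otimes \sigma_{M^1_n\ldots M^t_n}\otimes \id_{\ell_r,t},$$
where $E$ stores the $n$ copies of $\sigma$ and $\id_{\ell_r,t}$ is the perfectly correlated ``shared randomness'' object on $J_1\ldots J_t$ (with $\ell_r = n$) provided by Assumption \ref{assumepermmul}. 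Closure under tensor products (Definition \ref{def:multigenerictransform}, item $1(b)$) together with the form of $\cF$ forced by Assumption \ref{assumepermmul} puts $\mu_{EJ_1\ldots J_t}$ in $\cF$.

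Next, the parties apply the controlled-swap $U = \sum_{j=1}^{n} U_j \otimes \ketbra{j}_{M^1_r}\otimes \cdots\otimes \ketbra{j}_{M^t_r}$, where $U_j$ swaps the ``input block'' $M^1M^2\ldots M^t$ with the $j$-th catalyst block $M^1_jM^2_j\ldots M^t_j$. Each $U_j$ is a tensor product of local swaps across the $t$ parties, and the whole operation is precisely of the form allowed in Assumption \ref{assumepermmul}, so $U\in \cU$. Let $\Theta$ denote the output state. By construction, tracing out the input block and $J_1\ldots J_t$ yields
$$\Theta_{M^1_1\ldots M^t_n} = \frac{1}{n}\sum_{j=1}^n \rho_{M^1_j\ldots M^t_j}\otimes \bigotimes_{i\neq j}\sigma_{M^1_i\ldots M^t_i},$$
while the ``input block'' emerges in the state $\sigma_{M^1\ldots M^t}$, decoupled from the rest. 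Applying the convex-split lemma with the multi-party block playing the role of the single register $M$ gives
$$\Pur\!\left(\Theta_{M^1_1\ldots M^t_n},\ \sigma^{\otimes n}_{M^1\ldots M^t}\right)\leq \eps + \sqrt{2^k/n} = \eps+\delta.$$
Since $\sigma^{\otimes n}\in \cF$, and since the input block ends up in $\sigma_{M^1\ldots M^t}$ tensored with $\Theta_{M^1_1\ldots M^t_n}$, the catalytic transformation condition of Definition \ref{actualmulticatalytic} is met with error $\eps+\delta$ and randomness cost $\log n = k + 2\log(1/\delta)$.

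I expect the only delicate point is verifying that everything genuinely fits the multi-partite framework: namely, that the controlled-swap $U$ is a legitimate free operation across all $t$ parties and that the chosen catalyst is in $\cF$. Both are immediate from Assumption \ref{assumepermmul}, which was formulated precisely to allow a ``shared randomness + locally coordinated permutation'' construction; so the proof essentially reduces to that of Theorem \ref{theo:catalyst} once the notational bookkeeping for the $t$ parties is set up. No additional analytic estimates beyond the convex-split lemma are needed.
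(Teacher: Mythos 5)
Your proposal is correct and follows essentially the same route as the paper: the same choice $n=2^k/\delta^2$, the catalyst $\sigma^{\otimes n}$ together with the correlated shared-randomness state $\id_{n,t}$, the party-wise controlled swap permitted by Assumption \ref{assumepermmul}, and the convex-split lemma applied to the multi-party block viewed as a single register. No gaps.
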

\begin{proof}
Let $n:=\frac{2^k}{\delta^2}$ and $\sigma_{M^1M^2\ldots M^t}\in \cF$ be the quantum state achieving the minimum in the definition of $k$. The experimenters possess the state $\id_{n,t}$ in registers $J_1J_2\ldots J_t$, such that $|J_i|=n$ for all $i$. Introduce registers $M^i_1,M^i_2,\ldots M^i_n\equiv M^i$. They use the quantum state $$\mu_{M^1_1M^2_1\ldots M^t_1, M^1_2M^2_2\ldots M^t_2,\ldots M^1_nM^2_n\ldots M^t_n}:=\sigma^{\otimes n}_{M^1M^2\ldots M^t}$$ as a catalyst.  The protocol closely follows the protocol given in the proof of Theorem \ref{theo:catalyst}, where the party $i$ swaps the registers $M^i$ and $M^i_j$ controlled on the value $j$ in register $J_i$. It can be verified from Definition \ref{def:multigenerictransform} and Assumption \ref{assumepermmul} that all these operations belong to $\cF$ and $\cU$. The proof of completeness follows similarly using convex split lemma (Lemma \ref{eq:convexsplit}). 
\end{proof}

\subsection*{Converse}

Using argument similar to that of Theorem \ref{theo:converse}, we have the following corollary.
\begin{cor}
\label{cor:converse}
Fix an $\eps>0$. For every $(\eps, \log|J|, t)$- transformation of $\rho_{M^1M^2\ldots M^t}$ to $\cF$, it holds that 
$$\log |J| \geq \min_{\sigma_{M^1M^2\ldots M^t}\in \cF}\dmaxeps{\rho_{M^1M^2\ldots M^t}}{\sigma_{M^1M^2\ldots M^t}}{\eps},$$ if Assumption \ref{assumepermmul} is true.
\end{cor}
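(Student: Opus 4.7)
The plan is to mirror the argument of Theorem \ref{theo:converse} for the single-partite converse, with the key structural difference being that under Assumption \ref{assumepermmul}, the free randomness shared across the $t$ parties is perfectly correlated in the canonical basis, i.e.\ supported on $|J|$ rank-one projectors $\ketbra{k}_{M^1_r}\otimes\ldots\otimes\ketbra{k}_{M^t_r}$ rather than on a full $|J|^t$-dimensional space. Thanks to this, the effective dimension of the classical register $J_1J_2\ldots J_t$ remains $|J|$, and the argument yields the same $\log|J|$ bound as in the single-partite case.

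Concretely, I would start from an arbitrary $(\eps, \log|J|, t)$-transformation, writing $\Theta_{M^1\ldots M^tEJ_1\ldots J_t} = U(\rho_{M^1\ldots M^t}\otimes\mu_{EJ_1\ldots J_t})U^\dagger$ with a $\sigma_{M^1\ldots M^tE}\in\cF$ such that $\Pur(\Theta_{M^1\ldots M^tE},\sigma_{M^1\ldots M^tE})\leq\eps$. By Assumption \ref{assumepermmul}, $\mu_{EJ_1\ldots J_t}$ is classical-quantum with classical part diagonal in the perfectly correlated basis $\{\ket{k}^{\otimes t}\}_{k=1}^{|J|}$, so its $J_1\ldots J_t$-marginal is supported on a rank-$|J|$ projector. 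I then invoke Fact \ref{closeextension} to produce a classical-quantum extension $\Theta'_{M^1\ldots M^tEJ_1\ldots J_t}$ of $\sigma_{M^1\ldots M^tE}$ with the same classical structure on $J_1\ldots J_t$, satisfying
\[
\Pur(\Theta_{M^1\ldots M^tEJ_1\ldots J_t},\Theta'_{M^1\ldots M^tEJ_1\ldots J_t})=\Pur(\Theta_{M^1\ldots M^tE},\sigma_{M^1\ldots M^tE})\leq\eps.
\]

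Next, I would apply Fact \ref{fact:bipartitedmax} with the composite classical register $J_1\ldots J_t$ playing the role of the classical part. Because the support projector has rank $|J|$, one obtains $\Theta'_{M^1\ldots M^tEJ_1\ldots J_t}\preceq |J|\cdot\sigma_{M^1\ldots M^tE}\otimes\id_{|J|,t}$, and hence
\[
\dmax{\Theta'_{M^1\ldots M^tEJ_1\ldots J_t}}{\sigma_{M^1\ldots M^tE}\otimes\id_{|J|,t}}\leq\log|J|.
\]
Applying $U^\dagger$ (which preserves $\dmax$) and tracing out $EJ_1\ldots J_t$ yields the same bound for $\Tr_{EJ_1\ldots J_t}(U^\dagger\Theta'U)$ against $\Tr_{EJ_1\ldots J_t}(U^\dagger(\sigma_{M^1\ldots M^tE}\otimes\id_{|J|,t})U)$, while the triangle inequality combined with the $\eps$-bound above places the former within purified distance $\eps$ of $\rho_{M^1\ldots M^t}$.

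Finally, I would argue that the state $\Tr_{EJ_1\ldots J_t}(U^\dagger(\sigma_{M^1\ldots M^tE}\otimes\id_{|J|,t})U)$ lies in $\cF$, by combining: (i) closure under tensor product with the free resource $\id_{|J|,t}$ (Definition \ref{def:multigenerictransform}, Item $1(b)$); (ii) closure under the free operation $U^\dagger\in\cU$ (Item $3(b)$); and (iii) closure under partial trace (Item $1(c)$). This places the target state $\rho_{M^1\ldots M^t}$ within purified distance $\eps$ of a state in $\cF$ at max-relative entropy cost at most $\log|J|$, giving $\min_{\sigma\in\cF}\dmaxeps{\rho_{M^1\ldots M^t}}{\sigma}{\eps}\leq\log|J|$. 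The only subtle point — and the main place where Assumption \ref{assumepermmul} does real work — is the perfectly correlated structure of $\id_{|J|,t}$, which keeps the effective dimension of the composite classical register at $|J|$ rather than $|J|^t$; without it one would only recover a weaker $\log|J|\geq\frac{1}{t}\min\dmaxeps{\rho}{\sigma}{\eps}$-type bound by the cruder Fact \ref{fact:bipartitedmaxquantum}, analogous to Theorem \ref{converse_general}.
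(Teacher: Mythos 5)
Your proposal is correct and follows essentially the same route as the paper's proof: it reduces to the single-partite argument by observing that Assumption \ref{assumepermmul} forces the composite classical register $J_1\ldots J_t$ to be supported on the rank-$|J|$ projector $|J|\cdot\id_{|J|,t}$, then applies Fact \ref{closeextension}, Fact \ref{fact:bipartitedmax}, invariance of $\dmax{\cdot}{\cdot}$ under $U^\dagger$, data processing under partial trace, and the closure properties of $\cF$ exactly as the paper does. No gaps.
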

\begin{proof}

Consider any protocol that starts with quantum states $\rho_{M^1M^2\ldots M^t} \otimes \mu_{EJ_1J_2\ldots J_t}$ and applies the unitary $U := \sum_k U_k\otimes \ketbra{k}_{J_1}\otimes \ketbra{k}_{J_2}\otimes \ldots \ketbra{k}_{J_t}$ (where $U_k: \cH_{M^1M^2\ldots M^t}\otimes \cH_E\rightarrow \cH_{M^1M^2\ldots M^t}\otimes \cH_E$) to obtain $\Theta_{M^1M^2\ldots M^tEJ_1J_2\ldots J_t} = U(\rho_{M^1M^2\ldots M^t}\otimes \mu_{EJ_1J_2\ldots J_t}) U^{\dagger}$. By Assumption \ref{assumepermmul}, $\mu_{EJ_1J_2\ldots J_t}$ is a classical-quantum state with registers $J_1J_2\ldots J_t$ being classical and $\mu_{J_1J_2\ldots J_t}$ belonging to $\text{supp}(\id_{J,t})$. Furthermore, there exists a quantum state $\sigma_{M^1M^2\ldots M^tE} \in \cF$ such that $\Pur(\Theta_{M^1M^2\ldots M^tE}, \sigma_{M^1M^2\ldots M^tE}) \leq \eps$.

 From Fact \ref{closeextension}, there exists a classical-quantum extension $\Theta'_{M^1M^2\ldots M^tEJ_1J_2\ldots J_t}$ of $\sigma_{M^1M^2\ldots M^tE}$ (that is, $\Theta'_{M^1M^2\ldots M^tE} = \sigma_{M^1M^2\ldots M^tE}$ and $J_1J_2\ldots J_t$ is the classical register in $\Theta'_{M^1M^2\ldots M^tEJ_1J_2\ldots J_t}$) such that 
\begin{equation}
\label{eq:thetaextensionmul}
\Pur(\Theta_{M^1M^2\ldots M^tEJ_1J_2\ldots J_t}, \Theta'_{M^1M^2\ldots M^tEJ_1J_2\ldots J_t}) = \Pur(\Theta_{M^1M^2\ldots M^tE},\sigma_{M^1M^2\ldots M^tE}) \leq \eps.
\end{equation}
Moreover, $\text{supp}(\Theta'_{J_1J_2\ldots J_t}) \subseteq \text{supp}(\id_{J,t})$. Thus, by Fact \ref{fact:bipartitedmax}  
\begin{equation*}
\Theta'_{M^1M^2\ldots M^tEJ_1J_2\ldots J_t} \preceq |J|\Theta'_{M^1M^2\ldots M^tE}\otimes \id_{|J|,t} 
\end{equation*}
which implies
$$ \dmax{\Theta'_{M^1M^2\ldots M^tEJ_1J_2\ldots J_t}}{\sigma_{M^1M^2\ldots M^tE}\otimes \id_{|J|,t}} \leq \log|J|.$$
 From this, we conclude that 
\begin{equation}
\label{eq:lowerboundJmul}
\dmax{\Tr_{EJ_1J_2\ldots J_t}\left(U^{\dagger}\Theta'_{M^1M^2\ldots M^tEJ_1J_2\ldots J_t}U\right)}{\Tr_{EJ_1J_2\ldots J_t}\left(U^{\dagger}\left(\sigma_{M^1M^2\ldots M^tE}\otimes \id_{|J|,t}\right) U\right)} \leq \log|J|.
\end{equation}
From Equation (\ref{eq:thetaextensionmul}),  
\begin{eqnarray}
\label{eq:ballofrhomul}
&&\Pur\left(\Tr_{EJ_1J_2\ldots J_t}\left(U^{\dagger}\Theta'_{M^1M^2\ldots M^tEJ_1J_2\ldots J_t}U\right), \rho_{M^1M^2\ldots M^t}\right) \nonumber\\ &&\leq \Pur(U^{\dagger}\Theta'_{M^1M^2\ldots M^tEJ_1J_2\ldots J_t}U, \rho_{M^1M^2\ldots M^t}\otimes\mu_{EJ_1J_2\ldots J_t}) \nonumber\\ &&= \Pur(\Theta'_{M^1M^2\ldots M^tEJ_1J_2\ldots J_t}, \Theta_{M^1M^2\ldots M^tEJ_1J_2\ldots J_t}) \leq \eps.
\end{eqnarray}
From Definition \ref{def:multigenerictransform} (Items $1 (a), 3$), $$\Tr_{J_1J_2\ldots J_t}\left(U^{\dagger}\left(\sigma_{M^1M^2\ldots M^tE}\otimes \id_{|J|,t} \right)U\right) = \frac{1}{|J|}\sum_k U^{\dagger}_k(\sigma_{M^1M^2\ldots M^tE}) U_k \in \cF.$$ Thus, again from Definition \ref{def:multigenerictransform} (Item $1 (c)$), 
\begin{equation}
\label{eq:stateinC2mul}
\Tr_{EJ_1J_2\ldots J_t}\left(U\left(\sigma_{M^1M^2\ldots M^t}\otimes \mu'_E\otimes \id_{|J|,t} \right)U^{\dagger}\right) \in \cF.
\end{equation}
Combining Equations (\ref{eq:lowerboundJmul}), (\ref{eq:ballofrhomul}) and (\ref{eq:stateinC2mul}), we obtain 
$$ \min_{\sigma'_{M^1M^2\ldots M^t} \in \cF}\dmaxeps{\rho_{M^1M^2\ldots M^t}}{\sigma'_{M^1M^2\ldots M^t}}{\eps} \leq \log|J|.$$ This completes the proof.

\end{proof}

\section{Proof of Theorem \ref{coro_RER}}\label{sec_asymp}

First we show that the number of qubits of catalysts scales only polynomially in the number of copies of the quantum state to be transformed.

\begin{theorem}
\label{theo:asymptotictransform}
Suppose Assumption \ref{assumeperm} holds. Let $\eps,\gamma>0$ be such that $\gamma^2\leq \eps$ and $m$ be a sufficiently large integer. For every $\rho_{M}$, there exists a $(\eps, m(R+\gamma))$-catalytic transformation of $\rho_{M}^{\otimes m}$ to $\cF$ , such that the number of qubits of catalyst used upper bounded by 
$$|M|\cdot \frac{2\log m \cdot V}{\gamma^2} \cdot m^{\frac{2\cdot V \cdot (R +\gamma)}{\gamma^2}},$$ where $R := \min_{\sigma_{M}\in \cF} \relent{\rho_{M}}{\sigma_{M}}$ and $V:= \varrelent{\rho_{M}}{\sigma^*_{M}}$, where $\sigma^*_M$ achieves the minimum in the definition of $R$.
\end{theorem}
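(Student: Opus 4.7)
The plan is to apply the one-shot achievability (Theorem \ref{theo:catalyst}) not directly to $\rho_M^{\otimes m}$ but block-by-block, partitioning the $m$ copies into $\ell$ small blocks of size $s$ each, with the same catalyst shared across all blocks. A direct one-shot application would require a catalyst of size $\sim 2^{mR}$, exponential in $m$, whereas blocking replaces the exponent $mR$ by $sR$, which becomes polynomial in $m$ once $s = O(\log m)$. The blow-up in the randomness cost from running $\ell$ blocks in sequence is then modest, since the second-order $\sqrt{sV}\,\Phi^{-1}(\eps_s)$ term in the asymptotic expansion of $\dmaxeps{\cdot}{\cdot}{\eps_s}$ (Fact \ref{dmaxequi}) can be absorbed into the $\gamma$-slack.

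Concretely, I would set $s := \lceil 2V \log m / \gamma^2 \rceil$ and $\ell := \lceil m/s \rceil$, and to each block apply Theorem \ref{theo:catalyst} with reference free state $\sigma_M^{*\otimes s}$ (with $\sigma_M^*$ the minimizer in the definition of $R$), smoothing parameter $\eps_s$, and slack $\delta_s$. This yields per-block randomness $k_s + 2\log(1/\delta_s)$ with $k_s := \dmaxeps{\rho_M^{\otimes s}}{\sigma_M^{*\otimes s}}{\eps_s}$, which Fact \ref{dmaxequi} bounds by $sR + \sqrt{sV}\,\Phi^{-1}(\eps_s) + O(\log s)$. Choosing $\eps_s < 1/2$ forces $\Phi^{-1}(\eps_s) \le 0$, and choosing $\delta_s$ so that $2\log(1/\delta_s) \le s\gamma/2$ gives $k_s + 2\log(1/\delta_s) \le s(R+\gamma)$, so the total randomness is $\le \ell \cdot s(R+\gamma) \le m(R+\gamma)$. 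The number of catalyst copies per block is $n = 2^{k_s}/\delta_s^2 \le 2^{s(R+\gamma)}$; since the catalyst is shared across blocks, it contributes at most
\[
n \cdot s \cdot \log|M| \;\le\; \frac{2V\log m}{\gamma^2}\cdot m^{2V(R+\gamma)/\gamma^2}\cdot \log|M|
\]
qubits (absorbing $1/\delta_s^2$ into the leading constant), matching the claimed bound.

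The remaining step is to control the cumulative error from reusing the catalyst across blocks. By Theorem \ref{theo:catalyst} and Definition \ref{actualcatalytic}, after a single block's operation the catalyst register is within purified distance $\eps_s + \delta_s$ of its original state $\sigma_M^{*\otimes s\otimes n}$. The next block's unitary is a quantum map acting on (approximate catalyst) $\otimes$ (fresh input copies), so Fact \ref{slowchange} telescopes the per-block deviation into a total error $\le \ell(\eps_s + \delta_s)$, which is $\le \eps$ whenever $\eps_s + \delta_s \le \eps/\ell$. Such a choice is compatible with the earlier constraints, because $|\Phi^{-1}(\eps/\ell)| = O(\sqrt{\log m})$ by Fact \ref{gaussianupper} and $2\log(\ell/\eps) = O(\log m)$, both absorbed into $s\gamma/2$ once $m$ is sufficiently large; this is where the hypothesis $\gamma^2 \le \eps$ enters, ensuring the slack $s\gamma = 2V(\log m)/\gamma$ comfortably dominates these lower-order terms.

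The main obstacle will be verifying that this block-then-reuse construction actually produces a single catalytic transformation in the sense of Definition \ref{actualcatalytic}: the aggregated randomness register $J = J_1 \otimes \cdots \otimes J_\ell$ must be product with the single catalyst register $\mu_E := \sigma_M^{*\otimes s\otimes n}$, and the target state must factor as $\sigma_M^{*\otimes m} \otimes \mu_E$ with $\mu_E$ returned globally $\eps$-close to itself. The product structure of $J$ and of the free-state factorization holds by construction, while the global closeness of $\mu_E$ is precisely what Fact \ref{slowchange} delivers. Past this conceptual point, the remaining work is routine bookkeeping of constants and the $O(\log s)$ corrections in Fact \ref{dmaxequi}, all of which are swallowed by the ``sufficiently large $m$'' hypothesis.
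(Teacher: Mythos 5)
Your proposal is correct and follows essentially the same route as the paper's proof: the same block size $\approx 2V\log m/\gamma^2$, sequential application of the one-shot protocol of Theorem \ref{theo:catalyst} with per-block error $\approx \eps/\ell$ while reusing a single catalyst, error accumulation controlled by Fact \ref{slowchange}, and the per-block randomness bounded by $s(R+\gamma)$ via Facts \ref{dmaxequi} and \ref{gaussianupper}. The only cosmetic difference is that you drop the second-order term by noting $\Phi^{-1}(\eps_s)\le 0$, whereas the paper upper-bounds it by $+\sqrt{s\,V\log(2m/\eps s)}$ before absorbing it into $s\gamma$; both are valid.
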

\begin{proof}
Fix a quantum state $\sigma_{M}\in \cF$ and let $$\ellse:= \frac{2\log m \cdot \varrelent{\rho_{M}}{\sigma_{M}}}{\gamma^2} , k:=\dmaxeps{\rho_{M}^{\otimes \ellse}}{\sigma_{M}^{\otimes \ellse}}{\frac{\eps\ellse}{2m}} + 2\log\frac{2m}{\eps\ellse}.$$ Let $J$ be a random variable taking values uniformly in $\{1,2,\ldots 2^k\}$. Introduce registers $M'_1,M'_2,\ldots M'_{2^k}$ such that each $M'_i$ is equivalent to $\ellse$ copies of $M$. Experimenter introduces the quantum state $$\sigma_M^{\otimes \ellse\cdot 2^k} : =(\sigma^{\otimes \ellse})_{M'_1}\otimes \ldots (\sigma^{\otimes \ellse})_{M'_{2^k}}.$$ She generates $\frac{m}{\ellse}$ copies $J_1,J_2,\ldots J_{\frac{m}{\ellse}}$ of the random variable $J$.  Let $\cP$ be the $(\frac{\eps\ellse}{m}, k)$-catalytic transformation protocol for $\rho_{M}^{\otimes \ellse}$, as guaranteed by Theorem \ref{theo:catalyst}. The desired protocol $\cP'$ is as follows, where the quantum state $\rho_{M}^{\otimes m}$ is divided into $\frac{m}{\ellse}$ blocks as $$\rho_{M}^{\otimes m} = (\rho^{\otimes \ellse})_{M_1}\otimes (\rho^{\otimes \ellse})_{M_2}\otimes \ldots (\rho^{\otimes \ellse})_{M_{\frac{m}{\ellse}}}.$$ Here, $M_i$ is equivalent to $\ellse$ copies of $ M$ for all $i\in \{1,2, \ldots \frac{m}{\ellse}\}$.
\begin{enumerate}
\item Set $i=1$.
\item Experimenter introduces the random variable $J_i$ and runs the protocol $\cP$ for the $i$-th block with quantum state $(\rho^{\otimes \ellse})_{M_i}$.  
\item She throws away the random variable $J_i$.
\item Set $i\leftarrow i+1$. Go to step $2$.
\end{enumerate}

Let $\Theta$ be the quantum state on registers $M_1,M_2,\ldots M_{\frac{m}{\ellse}}, M'_1,M'_2,\ldots M'_{2^k}$ after the protocol ends. For iteration $i$, let $\cP'_i$ be the overall quantum map applied on these registers after step $3$ finishes. Observe that $\cP'_i$ acts only on registers $M_i, M'_1, M'_2, \ldots M'_{2^k}$. From Theorem \ref{theo:catalyst}, we have that 
$$\Pur\left(\cP'_i\left((\rho^{\otimes \ellse})_{M_i}\otimes \sigma_M^{\otimes \ellse\cdot 2^k} \right), (\sigma^{\otimes \ellse})_{M_i}\otimes \sigma_M^{\otimes \ellse\cdot 2^k}  \right) \leq \frac{\eps\ellse}{m}.$$
Thus,  from Fact \ref{slowchange}, $$\Pur(\Theta, (\sigma^{\otimes \ellse})_{M_1}\otimes \ldots (\sigma^{\otimes \ellse})_{M_{\frac{m}{\ellse}}}\otimes (\sigma^{\otimes \ellse})_{M'_1}\otimes \ldots (\sigma^{\otimes \ellse})_{M'_{2^k}}) \leq (\frac{m}{\ellse}-1)\frac{\eps\ellse}{m} \leq \eps.$$ Thus, the protocol achieves an $(\eps,\frac{m k}{\ellse})$-catalytic transformation of $\rho_{M}^{\otimes m}$. The theorem now follows from the following upper bound on $k$.
\begin{eqnarray*}
k &=& \dmaxeps{\rho_{M}^{\otimes \ellse}}{\sigma_{M}^{\otimes \ellse}}{\frac{\eps\ellse}{2m}} + 2\log\frac{2m}{\eps\ellse} \\
&\leq& \ellse \cdot \relent{\rho_{M}}{\sigma_{M}} + \sqrt{\ellse\cdot \varrelent{\rho_{M}}{\sigma_{M}}}\Phi^{-1}\left(\frac{\eps\ellse}{2m}\right) + O(\log \frac{2m}{\eps\ellse}) \\ &\leq&  \ellse \cdot \relent{\rho_{M}}{\sigma_{M}} + \sqrt{\ellse \log\frac{2m}{\eps\ellse}\cdot \varrelent{\rho_{M}}{\sigma_{M}}} + O(\log \frac{m}{\eps\ellse})\\ & = & \ellse \left( \relent{\rho_{M}}{\sigma_{M}} + \sqrt{\frac{1}{\ellse} \log\frac{2m}{\eps\ellse}\cdot \varrelent{\rho_{M}}{\sigma_{M}}} + O(\frac{\log \frac{m}{\eps\ellse}}{\ellse})\right) \\ &\leq & \ellse \left(\relent{\rho_{M}}{\sigma_{M}} + \gamma\right),
\end{eqnarray*}
where we have used Facts \ref{dmaxequi} and \ref{gaussianupper}, the choice of $\ellse$ and the relation $\gamma^2\leq \eps$.

\end{proof}

Now we provide optimal asymptotic i.i.d. analysis of Task \ref{catalytictransform}. For this we require the following result, essentially proved in \cite{DHorodecki99}, and restated in our setting.
\begin{lemma}[\cite{DHorodecki99}]
\label{relentresourcecont}
Let $\rho,\rho' \in \mathcal{D}(\cH_M)$ be quantum states on register $M$ with $\|\rho-\rho'\|_1:= \eps \leq \frac{1}{3}$. Let $\cF\subseteq \mathcal{D}(\cH_M)$ be a convex set. Then it holds that 
$$|\inf_{\sigma\in \cF}\relent{\rho}{\sigma} - \inf_{\sigma'\in \cF}\relent{\rho'}{\sigma'}| \leq \eps \left(\log M+ \inf_{\tau\in \cF}\|\log\tau\|_\infty\right) + \eps\log\frac{1}{\eps} + 4\eps.$$
\end{lemma}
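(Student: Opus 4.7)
The plan is to prove the lemma by exhibiting, for any near-optimal $\sigma^{*} \in \cF$ for $\rho$, a comparably good feasible point for $\rho'$ obtained by mixing $\sigma^{*}$ with a full-rank element $\tau \in \cF$. The symmetric inequality will follow by swapping the roles of $\rho$ and $\rho'$. Fix $\delta > 0$ small, choose $\sigma^{*} \in \cF$ with $\relent{\rho}{\sigma^{*}} \leq \inf_{\sigma \in \cF}\relent{\rho}{\sigma} + \delta$, and choose $\tau \in \cF$ with $\|\log \tau\|_{\infty} \leq \inf_{\tau' \in \cF}\|\log \tau'\|_{\infty} + \delta$. Define
\[
\tilde\sigma := (1-\eps)\sigma^{*} + \eps \tau,
\]
which lies in $\cF$ by convexity. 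At the end I will let $\delta \to 0$.

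The key identity is
\[
\relent{\rho'}{\tilde\sigma} - \relent{\rho}{\sigma^{*}} = \bigl(S(\rho) - S(\rho')\bigr) + \Tr\bigl((\rho' - \rho)(-\log\tilde\sigma)\bigr) + \Tr\bigl(\rho\,(\log\sigma^{*} - \log\tilde\sigma)\bigr).
\]
I would bound the three pieces separately. For the entropy difference I invoke the Fannes-type inequality, which for $\eps \leq 1/3$ gives $|S(\rho) - S(\rho')| \leq \eps \log|M| + \eta(\eps)$ with $\eta(\eps) \leq \eps \log(1/\eps) + O(\eps)$. For the second piece I use the two operator inequalities $\tilde\sigma \geq \eps\tau$ and $\tilde\sigma \geq (1-\eps)\sigma^{*}$ together with operator monotonicity of $\log$: the first gives the operator bound $-\log\tilde\sigma \leq \log(1/\eps)\,\id + (-\log\tau)$, so that
\[
\bigl|\Tr\bigl((\rho'-\rho)(-\log\tilde\sigma)\bigr)\bigr| \leq \|\rho-\rho'\|_{1}\,\|-\log\tilde\sigma\|_{\infty} \leq \eps\bigl(\log(1/\eps) + \|\log\tau\|_{\infty}\bigr).
\]
The second operator inequality handles the third piece: $\log\tilde\sigma \geq \log(1-\eps)\,\id + \log\sigma^{*}$ gives $\log\sigma^{*} - \log\tilde\sigma \leq -\log(1-\eps)\,\id$, so $\Tr(\rho(\log\sigma^{*} - \log\tilde\sigma)) \leq -\log(1-\eps) \leq 2\eps$ for $\eps \leq 1/3$.

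Combining the three estimates and using the choice of $\tau$ yields
\[
\inf_{\sigma' \in \cF}\relent{\rho'}{\sigma'} \leq \relent{\rho'}{\tilde\sigma} \leq \inf_{\sigma \in \cF}\relent{\rho}{\sigma} + \eps\bigl(\log|M| + \inf_{\tau \in \cF}\|\log\tau\|_{\infty}\bigr) + \eps\log(1/\eps) + 4\eps,
\]
after absorbing $\eta(\eps) + 2\eps$ plus the $\delta$-slack into the $4\eps$ term. Exchanging $\rho \leftrightarrow \rho'$ in the whole argument gives the matching lower bound, completing the proof.

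The main subtlety I anticipate is the third piece: I need operator monotonicity of $\log$ to conclude $\log\tilde\sigma \geq \log(1-\eps)\,\id + \log\sigma^{*}$ (noncommuting summands), and similarly for the $\eps\tau$ bound, so I must be careful that $\sigma^{*}$ and $\tau$ are full-rank on the relevant support before taking logarithms. If $\sigma^{*}$ is not invertible, replacing it by $(1-\delta)\sigma^{*} + \delta\tau$ before the construction repairs this with no change to the final constants, since $\tau$ is full-rank by the assumption $\|\log\tau\|_{\infty} < \infty$. The remaining estimates are routine.
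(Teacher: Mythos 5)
The paper does not actually prove this lemma --- it imports it from the cited reference of Donald and Horodecki --- and your argument is precisely the standard perturbation proof from that reference: pass from a near-optimal $\sigma^{*}$ for $\rho$ to the feasible point $(1-\eps)\sigma^{*}+\eps\tau$, split the difference of relative entropies into an entropy term, a trace term against $\rho'-\rho$, and a term controlled by operator monotonicity of $\log$, and symmetrize. The structure, the use of convexity of $\cF$, and the full-rank repair of $\sigma^{*}$ via $(1-\delta)\sigma^{*}+\delta\tau$ (justified by convexity of $\relent{\rho}{\cdot}$ in the second argument) are all correct.

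There is, however, one bookkeeping error in the final assembly. Your three pieces give $\eps\log|M|+\eta(\eps)$, then $\eps\bigl(\log(1/\eps)+\|\log\tau\|_{\infty}\bigr)$, then $2\eps$, and you claim to absorb $\eta(\eps)$ into the $4\eps$ term. But $\eta(\eps)\approx\eps\log(1/\eps)$ is not $O(\eps)$ for small $\eps$, so what your steps actually deliver is $\eps\bigl(\log|M|+\|\log\tau\|_{\infty}\bigr)+2\eps\log(1/\eps)+O(\eps)$, with coefficient $2$ rather than $1$ on the $\eps\log(1/\eps)$ term. This is not a structural flaw: you can recover the stated constant by exploiting $\Tr(\rho'-\rho)=0$ in the second piece --- write $\rho'-\rho=P-Q$ with $P,Q\geq 0$ and $\Tr P=\Tr Q=\eps/2$, so that $0\leq\Tr\bigl(P(-\log\tilde\sigma)\bigr)\leq\tfrac{\eps}{2}\bigl(\log(1/\eps)+\|\log\tau\|_{\infty}\bigr)$ and likewise for $Q$ --- together with the Fannes--Audenaert bound at trace distance $\eps/2$ for the first piece. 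Alternatively, note that the discrepancy is harmless for the only place the lemma is used (Theorem \ref{theo:asymptoticvalue}), where one only needs the right-hand side to vanish as $\eps\to 0$. Either way, state the bound you actually prove or supply the sharpened estimates; as written the claimed inequality does not follow from your displayed steps.
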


Using above lemma, we have the following theorem.

\begin{theorem}
\label{theo:asymptoticvalue}
Suppose Assumption \ref{assumeperm} holds and $C(\cF) < \infty$. Then for every $\rho_M$, the asymptotic randomness rate of catalytic transformation of $\rho_M$ is equal to 
$$\lim_{n\rightarrow \infty}\frac{1}{n}\min_{\sigma\in \cF}\relent{\rho_M^{\otimes n}}{\sigma}.$$ 
\end{theorem}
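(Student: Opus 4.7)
The plan is to prove both matching inequalities, achievability ($R \leq E^\infty(\rho)$) and converse ($R \geq E^\infty(\rho)$). The limit defining $E^\infty(\rho)$ exists by Fekete's lemma applied to the subadditive sequence $\{E(\rho^{\otimes n})\}$: given minimizers $\sigma_n,\sigma_m$, the tensor product $\sigma_n\otimes\sigma_m$ lies in $\cF$ by the tensor-product axiom of Definition \ref{def:generictransform}, and relative entropy is additive on products, so $E(\rho^{\otimes (n+m)})\leq E(\rho^{\otimes n})+E(\rho^{\otimes m})$.

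For achievability, I would use a blocking strategy. Given $\gamma>0$, first fix $n_0$ with $\tfrac{1}{n_0}E(\rho^{\otimes n_0})\leq E^\infty(\rho)+\gamma/2$, group copies of $\rho$ into blocks of length $n_0$, and apply Theorem \ref{theo:asymptotictransform} to $m$ copies of $\rho^{\otimes n_0}$ regarded as a single state on the composite register $M^{n_0}$. Taking the slack parameter $\gamma'=\min(\sqrt{\eps},\,n_0\gamma/2)$ (so that $(\gamma')^2\leq\eps$), for sufficiently large $m$ this yields a $(\eps,\,m(E(\rho^{\otimes n_0})+\gamma'))$-catalytic transformation of $\rho^{\otimes m n_0}$, with per-copy randomness rate at most $E^\infty(\rho)+\gamma$. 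Copies that do not fit into a full block can be absorbed by a trivial padding via Theorem \ref{theo:catalyst}, contributing $o(n)$ bits. Since $\gamma>0$ is arbitrary, every $R>E^\infty(\rho)$ is then achievable.

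For the converse, starting from any $(\eps,nR)$-catalytic transformation of $\rho^{\otimes n}$, Theorem \ref{theo:converse} yields $nR\geq \min_{\sigma\in\cF}\dmaxeps{\rho^{\otimes n}}{\sigma}{\eps}$. Unfolding the smoothing in the convention underlying Fact \ref{dmaxequi}, there exists $\rho'$ with $\Pur(\rho^{\otimes n},\rho')\leq\eps$ and $\sigma^*\in\cF$ with $nR\geq \dmax{\rho'}{\sigma^*}\geq \relent{\rho'}{\sigma^*}\geq \min_{\tilde\sigma\in\cF}\relent{\rho'}{\tilde\sigma}$. Applying the continuity bound Lemma \ref{relentresourcecont} on the register $M^n$ (with $\|\rho^{\otimes n}-\rho'\|_1\leq 2\eps$) then produces
\[
nR \;\geq\; \min_{\sigma\in\cF}\relent{\rho^{\otimes n}}{\sigma} - 2\eps\bigl(n\log|M|+\inf_{\tau\in\cF_n}\|\log\tau\|_\infty\bigr) - 2\eps\log\tfrac{1}{2\eps} - 8\eps.
\]
Dividing by $n$, sending $n\to\infty$ and invoking $C(\cF)<\infty$ gives $R\geq E^\infty(\rho)-2\eps(\log|M|+C(\cF))$; letting $\eps\to 0$ concludes.

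The main obstacle will be the converse: the one-shot lower bound is a minimization of $\dmaxeps{\cdot}{\sigma}{\eps}$ over arbitrary (possibly highly entangled across blocks) free $\sigma\in\cF_n$, so the product-state asymptotic expansion Fact \ref{dmaxequi} cannot be applied directly. Lemma \ref{relentresourcecont} is exactly the right bridge from the smoothed max-relative entropy to the ordinary relative entropy of resource, and its additive error term is linear in $\|\log\tau\|_\infty$ for the best free state, which is precisely why the hypothesis $C(\cF)<\infty$ is needed to make the error vanish after normalizing by $n$.
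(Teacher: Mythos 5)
Your proposal is correct and follows essentially the same route as the paper: achievability via Theorem \ref{theo:asymptotictransform} applied blockwise to $\rho^{\otimes n_0}$, and the converse via Theorem \ref{theo:converse}, lower-bounding the smooth max-relative entropy by the relative entropy of a nearby state and invoking the continuity bound of Lemma \ref{relentresourcecont} together with $C(\cF)<\infty$. Your additional remarks (Fekete's lemma for existence of the limit, handling of leftover copies, and the factor of $2$ when passing from purified to trace distance) only make explicit details the paper leaves implicit.
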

\begin{proof}
The upper bound follows by applying Theorem \ref{theo:asymptotictransform} to the quantum state $\rho^{\otimes n}_M$. For the lower bound, we appeal to the converse in Theorem \ref{theo:converse} to conclude that for the $(\eps, nR)$-catalytic transformation of $\rho_M^{\otimes n}$ to $\cF$, we require
$$R \geq \frac{1}{n}\min_{\sigma\in \cF}\dmaxeps{\rho_M^{\otimes n}}{\sigma}{\eps} \geq \frac{1}{n}\min_{\sigma\in \cF}\relenteps{\rho_M^{\otimes n}}{\sigma}{\eps}.$$ Now, observe that 
$$\min_{\sigma\in \cF}\relenteps{\rho_M^{\otimes n}}{\sigma}{\eps} = \min_{\sigma\in \cF}\min_{\rho'\in \ball{\eps}{\rho^{\otimes n}_M}}\relent{\rho'}{\sigma} = \min_{\rho'\in \ball{\eps}{\rho^{\otimes n}_M}}\min_{\sigma\in \cF}\relent{\rho'}{\sigma}.$$ Let $\rho' \in \ball{\eps}{\rho^{\otimes n}_M}$ be the quantum state that achieves the minimum above. From Lemma \ref{relentresourcecont}, it holds that $$\frac{1}{n}|\relent{\rho'}{\sigma}-\relent{\rho^{\otimes n}}{\sigma}| \leq \eps C(\cF) + \eps\log M +\frac{5}{n}\eps\log\frac{1}{\eps}.$$ Thus, 
$$R\geq \frac{1}{n}\min_{\sigma\in \cF}\relent{\rho_M^{\otimes n}}{\sigma} - \eps(\log|M| + C(\cF))) - \frac{5}{n}\eps\log\frac{1}{\eps}.$$ Letting $n\rightarrow \infty$ and $\eps\rightarrow 0$, we conclude the proof.

\end{proof}

\section{Conclusion}\label{sec_cld}
In this work we quantity the amount of `resource' contained in a given state, in terms of regularized relative entropy of resource, in a general resource theoretic framework that allows to use free resource as a catalyst. This then yields a new operational interpretation to this entropic quantity, which has a nice geometric interpretation. Our general catalytic resource framework, while having some restriction on the free resource and free operations, is still general enough to include several well-developed resource theories in the literature, including resource theory of entanglement, resource theory of coherence, resource theory of non-uniformity,  classical resource theory of randomness extraction, etc. 

Our main result follows by first providing a one-shot bound that characterizes the amount of randomness required in order to erase the resource contained in one copy of a given state. The one-shot bound is given by smooth max-relative entropy. This also gives the smooth max-relative entropy a new operational meaning in the resource theoretic framework. To the best of our knowledge, this is the first one-shot result that works for a general resource theoretic framework. This result is obtained via the convex-split lemma. 

Answering the amount of useful resource possessed by a given state is just a  first step toward a general resource theory. Understanding what other information-processing tasks could be done in this general framework is thus a meaningful and fruitful open question.

\subsection*{Acknowledgement}
We thank Mario Berta and Christian Majenz for helpful discussions. Part of this work was done when the authors were visiting Centrum Wiskunde $\&$ Informatica, Amsterdam and during the Institute for Mathematical Sciences workshop `Beyond I.I.D. in Information Theory'. This work is supported by the Singapore Ministry of Education and the National Research Foundation, also through the Tier 3 Grant “Random numbers from quantum processes” MOE2012-T3-1-009.

\bibliographystyle{ieeetr}
\bibliography{References}    
\end{document}